\documentclass{article}

\usepackage{microtype}
\usepackage{graphicx}
\usepackage{subfigure}
\usepackage{booktabs} 

\usepackage{hyperref}


\usepackage[accepted]{icml2024} 

\usepackage{amsmath}
\usepackage{amssymb}
\usepackage{mathtools}
\usepackage{amsthm}

\usepackage[capitalize,noabbrev]{cleveref}

\theoremstyle{plain}
\newtheorem{thm}{Theorem}

\theoremstyle{definition}

\theoremstyle{remark}

\usepackage{amsfonts,amscd,mathrsfs}
\usepackage{thm-restate}
\usepackage{natbib}
\usepackage{multirow}

\icmltitlerunning{A Fast Algorithm to Simulate Nonlinear Resistive Networks}

\begin{document}

\twocolumn[
\icmltitle{A Fast Algorithm to Simulate Nonlinear Resistive Networks}

\begin{icmlauthorlist}
\icmlauthor{Benjamin Scellier}{rain}
\end{icmlauthorlist}

\icmlaffiliation{rain}{Rain AI, San Francisco, CA, USA}

\icmlcorrespondingauthor{Benjamin Scellier}{benjamin@rain.ai}

\icmlkeywords{resistor network,nonlinear resistive network,deep resistive network,convex quadratic programming,block coordinate descent,self-learning machine,equilibrium propagation}

\vskip 0.3in
]


\printAffiliationsAndNotice{}

\begin{abstract}
Analog electrical networks have long been investigated as energy-efficient computing platforms for machine learning, leveraging analog physics during inference. More recently, resistor networks have sparked particular interest due to their ability to learn using local rules (such as equilibrium propagation), enabling potentially important energy efficiency gains for training as well. Despite their potential advantage, the simulations of these resistor networks has been a significant bottleneck to assess their scalability, with current methods either being limited to linear networks or relying on realistic, yet slow circuit simulators like SPICE. Assuming ideal circuit elements, we introduce a novel approach for the simulation of nonlinear resistive networks, which we frame as a quadratic programming problem with linear inequality constraints, and which we solve using a fast, exact coordinate descent algorithm. Our simulation methodology significantly outperforms existing SPICE-based simulations, enabling the training of networks up to 327 times larger at speeds 160 times faster, resulting in a 50,000-fold improvement in the ratio of network size to epoch duration. Our approach can foster more rapid progress in the simulations of nonlinear analog electrical networks.
\end{abstract}

\section{Introduction}

As energy costs associated with machine learning are rapidly increasing, neuromorphic computing platforms are being explored as alternatives to neural networks and GPUs \citep{markovic2020physics}. Such platforms leverage analog physics and compute-in-memory architectures to achieve significant energy gains.
In particular,
nonlinear resistive networks have recently sparked interest \citep{kendall2020training,dillavou2022demonstration,dillavou2023machine,wycoff2022desynchronous,anisetti2024frequency,stern2022physical,stern2024training,kiraz2022impacts,watfa2023energy,oh2023memristor}. Central to these networks are variable resistors (e.g. memristors) which act as trainable weights, coupled with diodes that introduce nonlinearities, as well as voltage and current sources for input signals. The conductances of these variable resistors can be adjusted, allowing the network to be trained to achieve specific computational tasks. Notably, like traditional neural networks, nonlinear resistive networks have two essential properties: they are universal function approximators \citep{scellier2023universal} and they can be trained by gradient descent \citep{kendall2020training,anisetti2024frequency}. Unlike GPU-based neural networks, however, they leverage the laws of electrical circuits (e.g. Kirchhoff's laws and Ohm's law) to perform inference and extract the weight gradients. Furthermore, the learning rules governing conductance changes are local. These features make nonlinear resistive networks good candidates as power-efficient learning-capable hardware, with recent experiments on memristive networks suggesting a potential 10,000x gain in energy efficiency compared to neural networks trained on GPUs \citep{yi2023activity}. Small-scale self-learning variable resistor networks have been built and succesfully trained on datasets such as Iris \citep{dillavou2022demonstration,dillavou2023machine}, validating the soundness of the approach.

Simulations of larger nonlinear resistive networks on tasks such as MNIST \citep{kendall2020training} and Fashion-MNIST \citep{watfa2023energy} further underscore their potential. These works and others \citep{kiraz2022impacts,oh2023memristor} performed the simulations using the general-purpose SPICE circuit simulator \citep{keiter2014xyce,vogt2020ngspice}. However, these efforts have been hampered by the slowness of SPICE, which is not specifically conceived to perform efficient simulations of resistive networks used for machine learning applications. To illustrate, \citet{kendall2020training} employed SPICE to simulate the training on MNIST of a one-hidden-layer network comprising 100 hidden nodes, a process which took one week for only ten epochs of training. Due to a scarcity of methods and algorithms for nonlinear resistive network simulations, another line of works resorts to linear networks \citep{stern2022physical,stern2024training,wycoff2022desynchronous}, thus missing a fundamental ingredient of modern machine learning applications: nonlinearity. In order to better understand nonlinear resistive networks, guide their hardware design, and further demonstrate their scalability to more complex tasks, the ability to efficiently simulate them has thus become crucial.

We introduce a novel methodology tailored for the simulations of nonlinear resistive networks. Our algorithm, applicable to networks with arbitrary topologies, is an instance of an exact coordinate descent algorithm for convex quadratic programming (QP) problems with linear constraints \citep{wright2015coordinate}. When applied to the `deep resistive network' architecture of \citet{kendall2020training}, our algorithm is an instance of an `exact block coordinate descent' algorithm, whose run-time on GPUs is orders of magnitude faster than SPICE. The contributions of the present manuscript are the following:
\begin{itemize}
\item We show that, in a nonlinear resistive network, under an assumption of ideality of the circuit elements (resistors, diodes, voltage sources and current sources), the steady state configuration of node electrical potentials is the solution of a convex minimization problem: specifically, a convex quadratic programming (QP) problem with linear inequality constraints (Theorem~\ref{thm:convex-qp-formulation}).
\item Using the QP formulation, we derive an algorithm to compute the steady state of an ideal nonlinear resistive network (Theorem~\ref{thm:coordinate-descent}). Our algorithm, which is an instance of an `exact coordinate descent' algorithm, is applicable to networks with arbitrary topologies.
\item For a specific class of nonlinear resistive networks called `deep resistive networks' (DRNs), we derive a specialized, fast, algorithm to compute the steady state (Section~\ref{sec:deep-resistive-network}). It exploits the bipartite structure of the DRN to perform exact block coordinate descent, where half of the coordinates (node electrical potentials) are updated in parallel at each step of the minimization process. Each step of our algorithm involves solely tensor multiplications, divisions, and clipping, making it amenable to fast executions on parallel computers such as GPUs.
\item We perform simulations of ideal DRNs, trained with equilibrium propagation (EP) on the MNIST dataset (Section \ref{sec:simulations}). Compared to the SPICE-based simulations of \citet{kendall2020training}, our DRNs have up to 327 times more parameters (variable resistors) and the training time per epoch is 160 shorter, resulting in a 50000x larger network size to epoch duration ratio. We also train DRNs of two and three hidden layers for the first time.
\item We compare in simulations DRNs with their closely related deep Hopfield networks (DHNs) studied in \citet{scellier2017equilibrium} (Appendix~\ref{sec:continuous-hopfield-networks}). We show that, while both models are comparable in performance, DRNs require much fewer iterations to converge to steady state (e.g. 6 iterations for a 3-hidden-layer DRN vs 100 iterations for a 3-hidden-layer DHN).
\item We prove that the equilibrium propagation formulas of \citet{scellier2017equilibrium,scellier2023energy} remain valid in the setting of ideal nonlinear resistive networks, and more generally when the set of feasible configurations is defined by inequality constraints (Appendix~\ref{sec:ep-inequality}).
\end{itemize}

We emphasize that our methodology to simulate nonlinear resistive networks relies on an assumption of ideality of the circuit elements — an abstraction that might appear overly simplistic. Although the real-world deviations from ideality are acknowledged, the great acceleration in simulation times that our methodology offers opens the door to large-scale simulations of nonlinear resistive networks, which we believe may also be informative about the behaviour of real (non-ideal) networks.

\section{Nonlinear Resistive Networks}
\label{sec:nonlinear-resistive-network}

In this work, we study electrical circuits composed of voltage sources, current sources, linear resistors and diodes. We refer to these circuits as nonlinear resistive networks.

\subsection{Model and Assumptions}
\label{sec:ideality-assumptions}

Following \citet{scellier2023universal}, we assume that the four circuit elements are \textit{ideal}, with their behaviour determined by the following current-voltage ($i$-$v$) characteristics (Figure~\ref{fig:resistive-elements}):
\begin{itemize}
\item Voltage Source: Satisfies $v=v_0$ for a constant voltage $v_0$, regardless of the current $i$.
\item Current Source: Satisfies $i = i_0$ for a constant current $i_0$, regardless of the voltage $v$. 
\item Linear Resistor: Follows Ohm's law, $i = g v$, where $g$ is the conductance ($g=1/r$, with $r$ being the resistance).
\item Diode: Satisfies $i=0$ for $v < 0$, and $v=0$ for $i>0$.
\end{itemize}
An ideal diode thus has two states: the off-state, where it behaves like an open switch, allowing no current to flow regardless of the voltage drop across it, or the on-state, where it behaves like a closed switch, allowing current to flow without any voltage drop.

\begin{figure*}
\begin{center}
\includegraphics[width=0.9\textwidth]{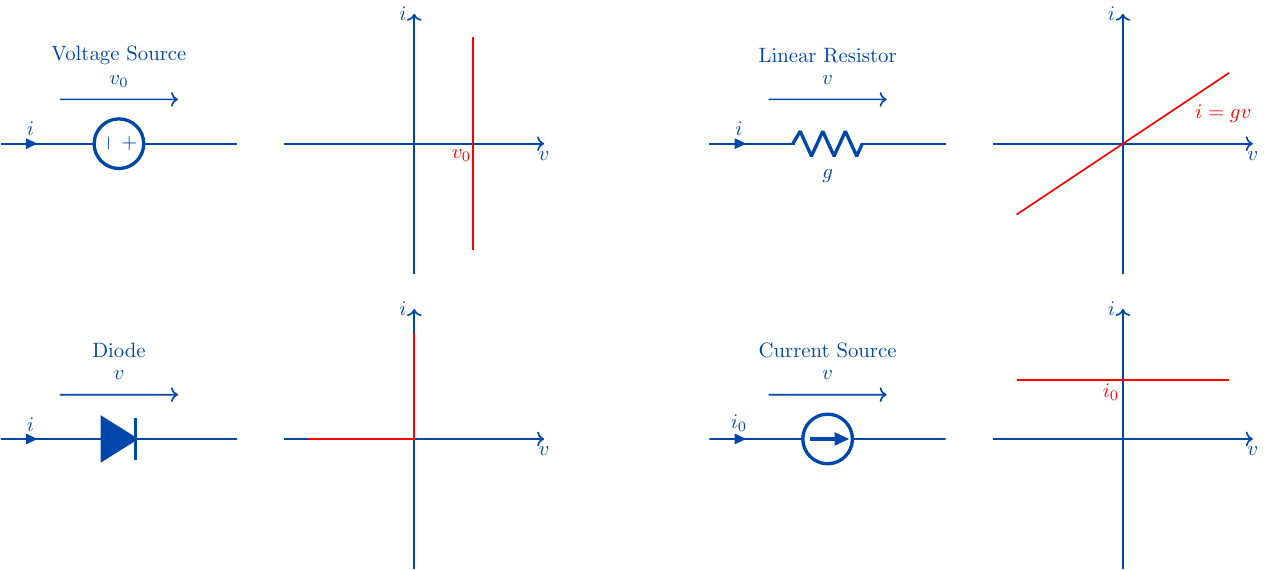}
\end{center}
\caption{
\textbf{Ideal circuit elements and their current-voltage (i-v) characteristics.} A linear resistor follows Ohm's law: $i = g v$, where $g$ is the conductance ($g=1/r$, with $r$ being the resistance). An ideal diode is characterized by $i=0$ for $v \leq 0$ ("off-state") and $v=0$ for $i>0$ ("on-state"). An ideal voltage source is characterized by $v=v_0$ for a constant voltage $v_0$ independent of the current $i$. An ideal current source is characterized by $i = i_0$ for a constant current $i_0$ independent of the voltage $v$.
}
\label{fig:resistive-elements}
\end{figure*}

\begin{figure*}
\begin{center}
\includegraphics[width=0.6\textwidth]{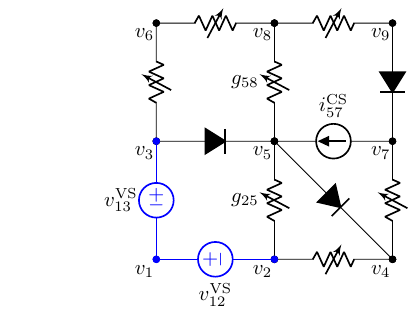}
\end{center}
\caption{
\textbf{A nonlinear resistive network.} By assumption, the voltage sources form a tree (in blue), so if we set e.g. $v_1=0$, we can immediately infer $v_2 = -v_{12}^{\rm VS}$ and $v_3 = v_{13}^{\rm VS}$. Next, we compute the steady state of the network by performing exact coordinate descent (Theorem~\ref{thm:coordinate-descent}) on the set of internal node electrical potentials (in black). As an example, one step of exact coordinate descent on node $k=5$ proceeds as follows. First we look at the resistors and current sources connected to node $k=5$ and we calculate $p_5 = (g_{25} v_2 + g_{58} v_8 + i_{57}^{\rm CS}) / (g_{25} + g_{58})$. Then we look at the diodes connected to node $k=5$ and we calculate $v_5 = \max(v_3, \min(p_5, v_4))$. This is the value of $v_5$ that achieves the minimum of $E(v_5)$ given other variables (node electrical potentials) fixed. We repeat the process with other nodes until convergence.
\label{fig:nonlinear-resistive-network}
}
\end{figure*}

A nonlinear resistive network can be represented as a graph where each branch contains a unique element, as shown in Figure~\ref{fig:nonlinear-resistive-network}. We denote the set of all branches as $\mathcal{B} = \mathcal{B}_{\rm VS} \cup \mathcal{B}_{\rm CS} \cup \mathcal{B}_{\rm R} \cup \mathcal{B}_{\rm D}$, where $\mathcal{B}_{\rm VS}$, $\mathcal{B}_{\rm CS}$, $\mathcal{B}_{\rm R}$ and $\mathcal{B}_{\rm D}$ are the subsets containing voltage sources, current sources, resistors and diodes, respectively. For every $(j,k) \in \mathcal{B}_{\rm VS}$, we denote the voltage across the voltage source between nodes $j$ and $k$ as $v_{jk}^{\rm VS}$. For every $(j,k) \in \mathcal{B}_{\rm CS}$, we denote the current through the current source between nodes $j$ and $k$ as $i_{jk}^{\rm CS}$. Finally, for every $(j,k) \in \mathcal{B}_{\rm R}$, we denote the conductance of the resistor between nodes $j$ and $k$ as $g_{jk}$.

We define a `steady state' as a configuration of branch voltages and branch currents that satisfies the above branch equations, as well as Kirchhoff's current law (KCL) at every node, and Kirchhoff's voltage law (KVL) in every loop. Under the above assumption of ideality, the steady state of a nonlinear resistive network is characterized by the following result -- see Appendix~\ref{sec:proofs} for a proof.

\begin{restatable}[Convex QP formulation]{thm}{convexqpformulation}
\label{thm:convex-qp-formulation}
Consider a nonlinear resistive network with $N$ nodes, and denote $v = (v_1,v_2,\ldots,v_N)$ the vector of node electrical potentials\footnote{The electrical potentials are defined up to a constant, so we may assume, for instance, $v_1=0$.}. Under the assumption of ideality, the steady state configuration of node electrical potentials, denoted $v_\star$, satisfies
\begin{equation}
\label{eq:energy-minimum}
v_\star = \underset{v \in \mathcal{S}}{\arg \min} \; E(v),
\end{equation}
where $E: \mathbb{R}^N \to \mathbb{R}$ is defined by
\begin{align}
\label{eq:energy-function}
E(v_1, \ldots, v_N) & := \frac{1}{2} \sum_{(j,k) \in \mathcal{B}_{\rm R}} g_{jk} \left( v_j - v_k \right)^2 \\
& + \sum_{(j,k) \in \mathcal{B}_{\rm CS}} i_{jk}^{\rm CS} \left( v_j - v_k \right),
\end{align}
and $\mathcal{S}$ is defined as:
\begin{align}
\label{eq:feasible-set}
\mathcal{S} := \{ & (v_1,v_2,\ldots,v_N) \in \mathbb{R}^N, \\
& v_j \leq v_k \quad \forall (j,k) \in \mathcal{B}_{\rm D}, \\
& v_j = v_k + v_{jk}^{\rm VS} \quad \forall (j,k) \in \mathcal{B}_{\rm VS} \}.
\end{align}
\end{restatable}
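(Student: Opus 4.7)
The plan is to recognize (\ref{eq:energy-minimum}) as a convex QP and show that its KKT conditions coincide with the physical steady-state conditions (branch equations, KCL, KVL). The Hessian of $E$ is the weighted Laplacian of the resistor subgraph, which is positive semidefinite, and $\mathcal{S}$ is a polyhedron, so the problem is convex and KKT conditions are both necessary and sufficient for optimality. A preliminary observation simplifies things: by parametrizing branch voltages as $v_j - v_k$, KVL around every loop is automatically satisfied (the sum telescopes). The defining constraints of $\mathcal{S}$ already encode the voltage-side branch equations: the equalities $v_j - v_k = v_{jk}^{\rm VS}$ capture the voltage sources, and the inequalities $v_j \leq v_k$ capture the off-state half of the diode i-v characteristic. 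What remains is to show that optimality enforces KCL at every node together with the current-side half of the branch equations (nonnegativity of the diode current and complementary slackness with the diode voltage).

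Next I would write the Lagrangian
\[
L(v,\lambda,\mu) = E(v) + \sum_{(j,k)\in\mathcal{B}_{\rm D}} \lambda_{jk}(v_j-v_k) + \sum_{(j,k)\in\mathcal{B}_{\rm VS}} \mu_{jk}(v_j-v_k-v_{jk}^{\rm VS}),
\]
with $\lambda_{jk} \geq 0$, and compute $\partial L / \partial v_m$ for each internal node $m$. The resistor part of $\partial E / \partial v_m$ produces $\sum_n g_{mn}(v_m - v_n)$, i.e., the total Ohm's-law current leaving $m$ through incident resistors; the current-source terms contribute $\pm i_{jk}^{\rm CS}$ signed by the orientation of each source at $m$. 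Interpreting the multipliers $\lambda_{jk}$ and $\mu_{jk}$ as the (oriented from $j$ to $k$) branch currents through the diodes and voltage sources, the stationarity condition $\partial L / \partial v_m = 0$ reads exactly as the signed sum of all currents leaving node $m$ being zero, which is KCL at $m$.

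The remaining diode current-side conditions then fall out of the complementary slackness and dual feasibility conditions. Dual feasibility $\lambda_{jk} \geq 0$ asserts that the diode current in its forward direction is nonnegative, and complementary slackness $\lambda_{jk}(v_j - v_k) = 0$ delivers the on/off dichotomy: either $v_j = v_k$ (on-state, with $\lambda_{jk}$ unconstrained beyond nonnegativity) or $\lambda_{jk} = 0$ (off-state, permitting $v_j < v_k$). The multipliers $\mu_{jk}$ for voltage sources are unconstrained in sign, matching the fact that the current through an ideal voltage source can take any sign. Combined with the primal feasibility $v \in \mathcal{S}$ and the KVL-by-parametrization observation, this shows that a KKT point is exactly a physical steady state, and conversely.

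The main obstacle I anticipate is bookkeeping signs and orientations consistently across all four element types so that each Lagrange multiplier really can be identified with the physical branch current in the chosen reference direction. A secondary point is justifying that the minimum is attained: the assumption that voltage sources form a tree (as noted in the caption of Figure~\ref{fig:nonlinear-resistive-network}) ensures that the equality constraints are consistent and that $\mathcal{S}$ is nonempty, while the gauge freedom $v \mapsto v + c\mathbf{1}$ in the null space of the Laplacian is removed by the normalization $v_1 = 0$, after which $E$ restricted to $\mathcal{S}$ is coercive and the infimum is achieved.
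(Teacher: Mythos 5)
Your proposal is correct and follows essentially the same route as the paper's proof: both identify the KKT conditions of the convex QP with the physical steady-state equations, interpreting the Lagrange multipliers of the diode inequality constraints and voltage-source equality constraints as the corresponding branch currents, so that stationarity of the Lagrangian is KCL, dual feasibility is nonnegativity of the diode current, and complementary slackness is the diode on/off dichotomy. Your additional remarks on coercivity and the gauge freedom are sound but not part of the paper's argument.
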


A few comments are in order. First, $\mathcal{S}$ is the set of feasible configurations of node electrical potentials (or the \textit{feasible set} for short), and $E$ is the energy function of the network (or the \textit{objective function}). Importantly, a configuration $v \in \mathcal{S}$ does not necessarily satisfy all the laws of electrical circuit theory - in other words, a \textit{feasible} configuration is not necessarily the \textit{physically realized} configuration. Similarly, the energy function $E$ is defined for every feasible configuration $v \in \mathcal{S}$, even those that do not comply with all the laws of electrical circuit theory. Theorem~\ref{thm:convex-qp-formulation} states that among all feasible configurations, the one that is physically realized (the steady state) is the configuration that minimizes $E$.

Second, the feasible set $\mathcal{S}$ is defined by linear equality and inequality constraints. The constraint $v_j \geq v_k$ for every $(j,k) \in \mathcal{B}_{\rm D}$ ensures the voltage $v_j-v_k$ across the diode is non-negative ($v_j > v_k$ if the diode is in the off-state, and $v_j = v_k$ if the diode is in the on-state). The constraint $v_j = v_k + v_{jk}^{\rm VS}$ for every $(j,k) \in \mathcal{B}_{\rm VS}$ ensures the voltage $v_j-v_k$ across the voltage source is $v_{jk}^{\rm VS}$. The diodes and voltage sources thus constrain the set of feasible configurations. We note that some conditions on the network topology and branch characteristics must be met to ensure that the feasible set $\mathcal{S}$ is non-empty: for instance, if the network contains a loop of voltage sources whose voltage drops do not sum to zero, KVL is violated and the feasible set is empty.

Third, the energy function $E$ is half the total power dissipated in the resistors plus the total power dissipated in the current sources. Specifically, $g_{jk} \left( v_j - v_k \right)^2$ represents the power dissipated in the resistor of branch $(j,k) \in \mathcal{B}_{\rm R}$, and $i_{jk}^{\rm CS} \left( v_j - v_k \right)$ represents the power dissipated in the current source of branch $(j,k) \in \mathcal{B}_{\rm CS}$. Theorem \ref{thm:convex-qp-formulation} generalizes the \textit{principle of minimum dissipated power} which states that, in a linear resistor network (with voltage sources and linear resistors, but without diodes and current sources), among all feasible configurations of node electrical potentials, the one physically realized minimizes the power dissipated in the resistors. Other related results include Onsager's principle \citep{onsager1931reciprocal} and Millar's theory \citep{millar1951cxvi}, which provides a variational formulation for the steady state in a network composed of elements with arbitrary current-voltage (i-v) characteristics.

Finally, as a sum of convex functions, the energy $E(v)$ is a convex function of the node electrical potentials $v$. Specifically, the energy function $E(v)$ is a quadratic form in $v$, the Hessian of $E$ being positive definite. The feasible set $\mathcal{S}$, defined by linear constraints, is also convex. Therefore, Theorem~\ref{thm:convex-qp-formulation} states that the steady state configuration of node electrical potentials is the solution of a convex optimization problem, specifically, a convex quadratic programming (QP) problem with linear constraints.

Next we turn to our algorithm for simulating (ideal) nonlinear resistive networks.

\subsection{An Algorithm to Simulate Nonlinear Resistive Networks}

Equiped with the QP formulation (Theorem~\ref{thm:convex-qp-formulation}), we introduce a numerical method to compute the steady state $v_\star$ of an ideal nonlinear resistive network \eqref{eq:energy-minimum}. Starting from some configuration $v^{(0)} \in \mathcal{S}$, our algorithm minimizes the energy function $E(v)$ with respect to $v \in \mathcal{S}$ by building a sequence of configurations $v^{(1)}, v^{(2)}, \ldots, v^{(t)}, \ldots$ in the feasible set $\mathcal{S}$ such that $E(v^{(t+1)}) \leq E(v^{(t)})$ for every $t \geq 0$. Specifically, we use an `exact coordinate descent' algorithm. The main ingredient of our algorithm is the result below.

For simplicity and clarity, we assume here that the voltage sources form a connected component -- a tree of the network -- although our algorithm can be adapted to the more general case where the voltage sources form a forest (i.e a disjoint union of multiple trees).

\begin{restatable}[Exact coordinate descent]{thm}{coordinatedescent}
\label{thm:coordinate-descent}
Let $v = (v_1, \ldots, v_N) \in \mathcal{S}$. Let $k$ be an internal node of the network, i.e. a node that does not belong to the connected component of voltage sources. Define
\begin{equation}
\label{eq:update-equation}
p_k := \frac{\sum_{j \in \mathcal{B}_{\rm R}} g_{kj} v_j + \sum_{j \in \mathcal{B}_{\rm CS}} i_{jk}^{\rm CS}}{\sum_{j \in \mathcal{B}_{\rm R}} g_{kj}}
\end{equation}
and
\begin{gather}
\label{eq:bounds}
v_k^{\rm min} := \max_{j: (j,k) \in \mathcal{B}_{\rm D}} v_j, \qquad v_k^{\rm max} := \min_{j: (k,j) \in \mathcal{B}_{\rm D}} v_j, \\
v_k^\prime := \min \left( \max \left( v_k^{\rm min}, p_k \right), v_k^{\rm max} \right), \\
v^\prime := (v_1, \ldots, v_{k-1}, v_k^\prime, v_{k+1}, \ldots, v_N).
\end{gather}
Then, among all configurations $v'' \in \mathcal{S}$ of the form $v''=(v_1, \ldots, v_{k-1}, v_k'', v_{k+1}, \ldots, v_N)$, the configuration $v^\prime$ is the one with the lowest energy, i.e.
\begin{equation}
v_k^\prime = \underset{v_k'' \in \mathcal{S}}{\arg \min} \; E(v_1, \ldots, v_{k-1}, v_k'', v_{k+1}, \ldots, v_N).
\end{equation}
In particular $v^\prime \in \mathcal{S}$ and $E(v^\prime) \leq E(v)$.
\end{restatable}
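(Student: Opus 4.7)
The plan is to reduce to a one-dimensional convex optimization in $v_k$ with $v_j$ fixed for $j \neq k$, and then show that the clipped formula is exactly the constrained minimizer of that univariate problem.

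First I would fix all other coordinates and regard $E$ as a function of $v_k$ alone. Inspecting \eqref{eq:energy-function}, only branches incident to $k$ contribute to the $v_k$-dependence. Since $k$ is assumed to be an internal node (not in the connected component of voltage sources), no branch in $\mathcal{B}_{\rm VS}$ touches $k$, so the only surviving terms come from resistors and current sources adjacent to $k$. Grouping them, the univariate function takes the form
\[
E(\ldots,v_k,\ldots) = \tfrac{1}{2} a v_k^2 - b v_k + C,
\]
where $a = \sum_{j:(j,k)\in\mathcal{B}_{\rm R}} g_{jk}$, the linear coefficient $b$ collects $g_{jk} v_j$ from resistor terms and the signed $i^{\rm CS}$ contributions from current sources (with sign determined by the orientation convention in \eqref{eq:energy-function}), and $C$ absorbs everything independent of $v_k$. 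Since $a > 0$ (the node has at least one resistor; I would flag the degenerate case $a=0$ as a side remark requiring a trivial separate treatment), this is a strictly convex quadratic with unique unconstrained minimizer $v_k = b/a$, which is precisely $p_k$ in \eqref{eq:update-equation}.

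Next I would identify the constraints in $\mathcal{S}$ that actually bind $v_k$. The voltage source constraints are inactive since $k$ is internal. The diode constraints split by orientation: those of the form $(j,k)\in\mathcal{B}_{\rm D}$ impose $v_j \leq v_k$, and those of the form $(k,j)\in\mathcal{B}_{\rm D}$ impose $v_k \leq v_j$. With the other coordinates fixed, these collapse to the single interval condition $v_k^{\rm min} \leq v_k \leq v_k^{\rm max}$ (with the usual convention that an empty max is $-\infty$ and an empty min is $+\infty$). Since $v$ itself lies in $\mathcal{S}$, we have $v_k^{\rm min} \leq v_k^{\rm max}$, so this interval is nonempty.

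It then remains to minimize a strictly convex univariate quadratic over the interval $[v_k^{\rm min}, v_k^{\rm max}]$. A standard one-line argument (or direct case analysis on the sign of the derivative at the endpoints) shows the minimizer is the projection of the unconstrained minimum $p_k$ onto this interval, i.e.\ $v_k^\prime = \min(\max(v_k^{\rm min}, p_k), v_k^{\rm max})$. Finally, I would note that since only the $k$-th coordinate is changed and no other constraint in $\mathcal{S}$ involves $v_k$, feasibility is preserved: $v^\prime \in \mathcal{S}$. Monotone decrease $E(v^\prime) \leq E(v)$ is then immediate because $v_k \in [v_k^{\rm min}, v_k^{\rm max}]$ was a candidate in the very minimization that produced $v_k^\prime$. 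The only mildly delicate step is the sign-bookkeeping for current source contributions to confirm they collapse to the $p_k$ formula as written; everything else is essentially tautological once the problem is reduced to one variable.
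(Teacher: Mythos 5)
Your proposal is correct and follows essentially the same route as the paper's proof: reduce $E$ to a univariate convex quadratic in $v_k$ with the other potentials fixed, identify the unconstrained minimizer as $p_k$, and project it onto the interval $[v_k^{\rm min}, v_k^{\rm max}]$ imposed by the diode constraints. Your side remarks (the empty-$\max$/$\min$ conventions, nonemptiness of the interval because $v \in \mathcal{S}$, and the degenerate case where no resistor touches node $k$) are minor refinements that the paper leaves implicit.
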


We prove Theorem~\ref{thm:coordinate-descent} in Appendix \ref{sec:proofs}. Below, we provide an intuitive explanation of the result. Let $k$ such that $1 \leq k \leq N$. The energy function $E$ of Eq.~\eqref{eq:energy-function} is a quadratic function of $v_k$ given the state of other node electrical potentials fixed ($v_1, \ldots, v_{k-1}, v_{k+1}, \ldots, v_N$). In other words, $E$ is of the form $E = a_k v_k^2 + b_k v_k + c_k$ for some real-valued coefficients $a_k$, $b_k$ and $c_k$ that do not depend on $v_k$. Specifically, the values of $a_k$ and $b_k$ are $a_k:=\frac{1}{2}\sum_{j \in \mathcal{B}_{\rm R}} g_{jk}$ and $b_k:=-\sum_{j \in \mathcal{B}_{\rm R}} g_{kj} v_j - \sum_{j \in \mathcal{B}_{\rm CS}} i_{jk}^{\rm CS}$. Furthermore, the range of feasible values for $v_k$, constrained by the diodes, is of the form $[v_k^{\rm min}, v_k^{\rm max}]$ where $v_k^{\rm min}$ and $v_k^{\rm max}$ are given by Eq.~\eqref{eq:bounds}. Since the coefficient $a_k$ is positive (as a sum of conductances), $E(v_k)$ is bounded below and its minimum in $\mathbb{R}$ is obtained at $p_k := - b_k / 2 a_k$. The minimum of $E(v_k)$ in the interval $[v_k^{\rm min}, v_k^{\rm max}]$ is found by clipping $p_k$ between $v_k^{\rm min}$ and $v_k^{\rm max}$. These boundaries reflect the constraints imposed by the diodes connected to node $k$, introducing nonlinearities.

Using Theorem~\ref{thm:coordinate-descent}, we can minimize the energy function $E$ via an `exact coordinate descent' strategy. Assuming that the feasible set $\mathcal{S}$ is not empty, we start from some feasible configuration $v \in \mathcal{S}$. Then, at each step, we pick some internal node $k$, and compute the value $v_k^\prime$ that minimizes $E$ given the values of other variables fixed, using Theorem~\ref{thm:coordinate-descent}. This yields a new feasible configuration $v^\prime \in \mathcal{S}$. We then pick another variable and repeat the process. At each step, the energy $E$ either remains constant or decreases.

\section{Deep Resistive Networks}
\label{sec:deep-resistive-network}

We now turn to the deep resistive network (DRN) model, a layered nonlinear resistive network architecture introduced in \citet{kendall2020training}. DRNs, which take inspiration from the architecture of layered neural networks, offer advantages for hardware design as they are potentially amenable for implementation using crossbar arrays of memristors \citep{xia2019memristive}. In this section, we show that ideal DRNs also present advantages in terms of simulation speed, when our algorithm is executed on multi-processor computers such as GPUs.

\begin{figure*}[ht!]
\begin{center}
\fbox{
\includegraphics[width=0.95\textwidth]{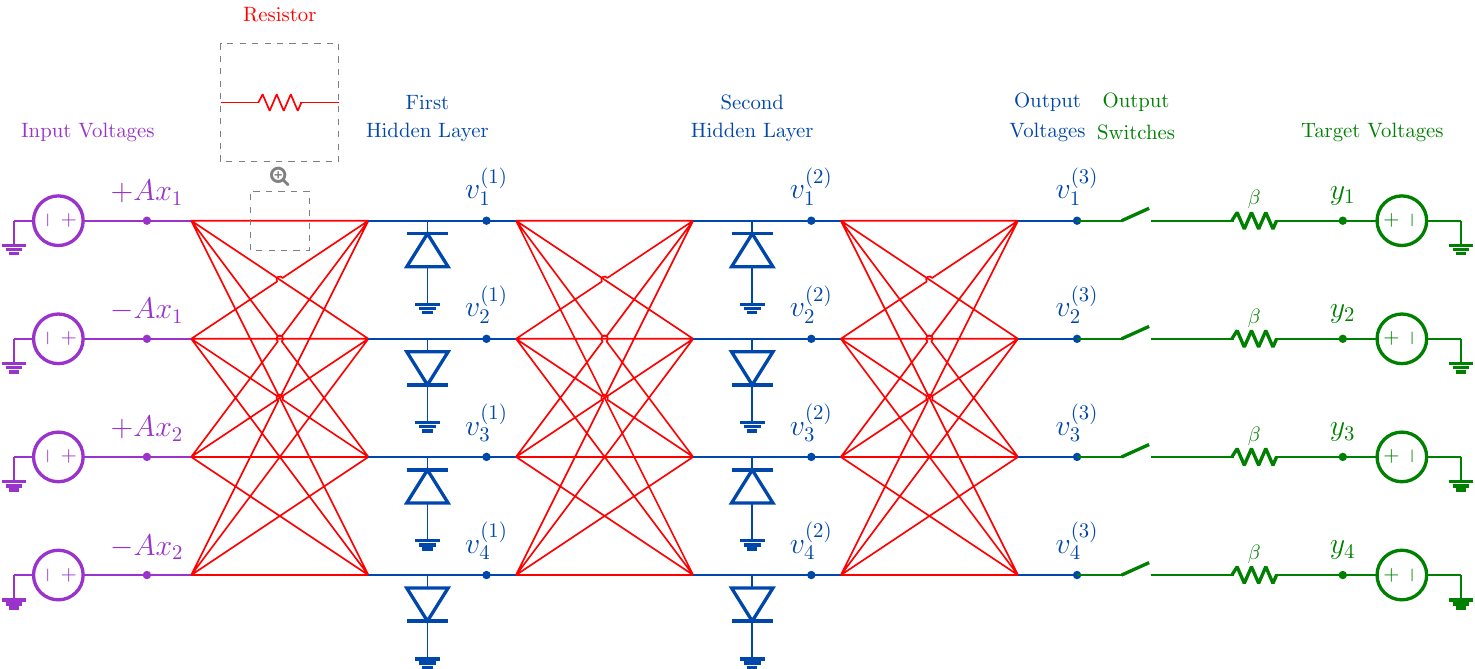}
}
\vspace{0.5cm}
\includegraphics[width=0.8\textwidth]{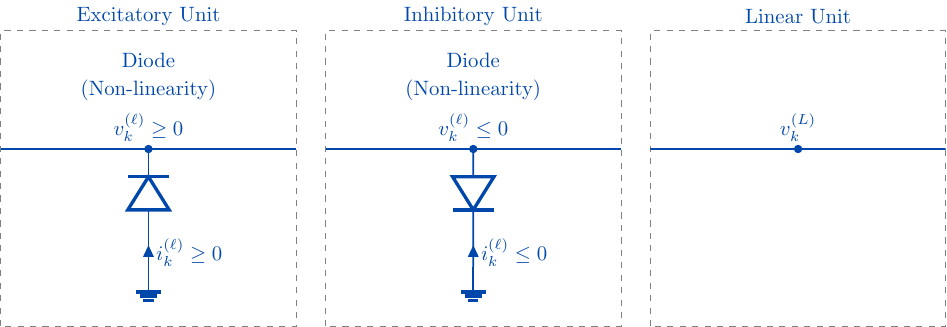}
\end{center}
\caption{
\textbf{Top.} A deep resistive network (DRN) with $L=3$ layers. Input voltage sources are set to input values: $v_1^{(0)} = A x_1$, $v_2^{(0)} = - A x_1$, $v_3^{(0)} = A x_2$ and $v_4^{(0)} = - A x_2$, where $A$ is the input amplification factor. At inference, output switches are open. Equilibrium propagation learning requires \textit{nudging} the output node voltages ($v_1^{(3)}$, $v_2^{(3)}$ $v_3^{(3)}$ and $v_4^{(3)}$) towards the target voltages ($y_1$, $y_2$, $y_3$ and $y_4$), which is achieved by closing the output switches. In the DRN architecture, the update rule for a given unit prescribed by exact coordinate descent depends only on the states of the units of the previous layer and the next layer. We can thus update the even layers ($\ell=2$) simulatneously, and then update all the odd layers ($\ell=1$ and $\ell=3$) simultaneously. This is called exact block coordinate descent. \textbf{Bottom.} To form a nonlinear unit, we place a diode between the unit's node and ground. Depending on the orientation of the diode, the units come in two flavours: excitatory units and inhibitory units.
}
\label{fig:deep-resistive-network}
\end{figure*}

\subsection{Deep Resistive Network Architecture}

The DRN architecture is defined as follows. First of all, we choose a reference node called `ground'. A set of nodes connected to ground by voltage sources form the `input layer'. These voltage sources can be set to input values, playing the role of input variables. Another set of nodes form the `output layer' whose electrical potentials play the role of model outputs.

\paragraph{Energy function.}
In a DRN, the circuit elements (voltage sources, resistors, diodes and current sources) are assembled into a layered network, mimicking the architecture of a deep neural network (Figure \ref{fig:deep-resistive-network}). For each $\ell$ such that $0 \leq \ell \leq L$, we denote $N_\ell$ the number of nodes in layer $\ell$, where $L$ is the number of layers in the DRN. Each node in the network is also called a `unit' by analogy with a neural network. We denote $v_k^{(\ell)}$ the electrical potential of the $k$-th node of layer $\ell$, which we may think of as the unit's activation. Pairs of nodes from two consecutive layers are interconnected by variable resistors - the `trainable weights'. Denoting $g_{jk}^{(\ell)}$ the conductance of the variable resistor between the $j$-th node of layer $\ell-1$ and the $k$-th node of layer $\ell$, the energy function \eqref{eq:energy-function} of a DRN takes the form
\begin{equation}
\label{eq:energy-function-drn}
E(v) = \frac{1}{2} \sum_{\ell=1}^{L} \sum_{j=1}^{N_{\ell-1}} \sum_{k=1}^{N_\ell} g_{jk}^{(\ell)} \left( v_j^{(\ell-1)} - v_k^{(\ell)} \right)^2.
\end{equation}

\paragraph{Feasible set.}
The voltage sources and resistors being linear elements, the network built thus far is linear. To make it nonlinear, we use diodes. For each unit, we place a diode between the unit's node and ground, which can be oriented in either of the two directions. If the diode points from ground to the unit's node, the unit's electrical potential is non-negative: we call it an `excitatory unit'. Conversely, if the diode points from the unit's node to ground, the unit's electrical potential is non-positive: we call it an `inhibitory unit'. For each internal (`hidden') layer $\ell$ of the DRN ($1 \leq \ell \leq L-1$), we orient the diodes so that the units of even indices are excitatory and the units of odd indices are inhibitory. Finally, the units of the output layer ($\ell=L$) are linear, i.e. they do not possess diodes. The feasible set \eqref{eq:feasible-set} corresponding to this DRN architecture is
\begin{gather}
\label{eq:feasible-set-drn}
\mathcal{S} = \{ v \in \mathbb{R}^{\sum_{\ell=1}^L N_\ell} \mid v_k^{(\ell)} \geq 0 \text{ if k is even}, \\
v_k^{(\ell)} \leq 0 \text{ if k is odd}, 1 \leq \ell \leq L-1, 1 \leq k \leq N_\ell \}.
\end{gather}

\paragraph{Input voltage sources.}
One constraint with resistive networks in general, and deep resistive networks in particular, is the non-negativity of the weights - a conductance is non-negative. By using both excitatory units and inhibitory units in the hidden layers, we have partially overcome this constraint. To further enhance the representational capacity of the DRN, we also double the number of units in the input layer -- i.e. we choose $N_0 = 2 \dim(x)$ where $\dim(x)$ is the dimension of input $x$ -- and we set the input voltage sources such that $v_{2k}^{(0)} = - v_{2k-1}^{(0)}$ for each $1 \leq k \leq N_0$. Another constraint with DRNs is the decay in amplitude of the layers' voltage values, as the depth of the network increases. We overcome this constraint by amplifying the input voltages by a factor $A \gg 1$, so that $v_{2k-1}^0 = +A x_k$ and $v_{2k}^0 = -A x_k$ for every $k$, where $x_k$ is the $k$-th input value ($1 \leq k \leq N_0$).

\paragraph{Output switches.}
Finally, additional circuitry is needed for equilibrium propagation (EP) learning. Each output node is linked to ground via a switch in series with a resistor and a voltage source. The voltage sources are used to set the desired output values, playing the role of targets. The resistors all share a common conductance value $\beta > 0$. During inference, the switches are open, whereas in the training phase of EP, the switches are closed to drive the state of output units towards the desired output values -- see Appendix~\ref{sec:equilibrium-propagation} for a brief presentation of EP \citep{scellier2017equilibrium}.

\subsection{A Fast Algorithm to Simulate Deep Resistive Networks}
\label{sec:block-coordinate-descent}

In a DRN, the update rules (Theorem~\ref{thm:coordinate-descent}) for the units take the following form. For every $(\ell,k)$ such that $1 \leq \ell \leq L-1$ and $1 \leq k \leq N_\ell$, the update rule for $v_k^{(\ell)}$ is
\begin{gather}
\label{eq:update-rule-hidden-drn}
p_k^{(\ell)} := \frac{\sum_{j=1}^{N_{\ell-1}} g_{jk}^{(\ell)} v_j^{(\ell-1)} + \sum_{j=1}^{N_{\ell+1}} g_{kj}^{(\ell+1)} v_j^{(\ell+1)}}{\sum_{j=1}^{N_{\ell-1}} g_{jk}^{(\ell)} + \sum_{j=1}^{N_{\ell+1}} g_{kj}^{(\ell+1)}} \\
v_k^{(\ell)} \leftarrow 
\left\{
\begin{array}{l}
\max \left( 0, p_k^{(\ell)} \right) \qquad \text{if k is even (excitatory unit)}, \\
\min \left( 0, p_k^{(\ell)} \right) \qquad \text{if k is odd (inhibitory unit)}.
\end{array}
\right.
\end{gather}
Assuming that the output switches are closed, the update rule for the output unit $v_k^{(L)}$ is
\begin{equation}
\label{eq:update-rule-output-drn}
v_k^{(L)} \leftarrow 
\frac{\sum_{j=1}^{N_{\ell-1}} g_{jk}^{(L)} v_j^{(L-1)} + \beta y_k}{\sum_{j=1}^{N_{\ell-1}} g_{jk}^{(L)} + \beta}.
\end{equation}
If the output switches are open, Equation~\eqref{eq:update-rule-output-drn} still holds by setting $\beta=0$.

We now derive a specialized, fast algorithm to compute the steady state of a DRN. Since the update rule for $v_k^{(\ell)}$ depends only on the state of the units in layers $\ell-1$ and $\ell+1$, and since this is true for all the units in layer $\ell$, we may update all these units simultaneously rather than sequentially. Pushing this idea further, we can partition the layers of the network in two groups: the group of layers of even index (even $\ell$) and the group of layers of odd index (odd $\ell$). Since the update rules for the odd layers depend only on the state of even layers, and vice versa, we can compute the steady state of a DRN by updating alternatively the layers of odd indices (given the state of the layers of even indices fixed) and the layers of even indices (given the state of the layers of odd indices fixed). We obtain an `exact block coordinate descent' algorithm to simulate DRNs. The property of DRNs that make it possible is the bipartite structure of their graph, where the layers of even indices constitute one set, and the layers of odd indices constitute the other set.

Importantly, equations~\eqref{eq:update-rule-hidden-drn} and ~\eqref{eq:update-rule-output-drn} can be written in matrix-vector form. Each step of our exact block coordinate descent algorithm (updating half of the layers) thus consists in performing $\sim L$ matrix-vector multiplications, $\sim L/2$ divisions and $\sim L/2$ clipping operations. This makes it an algorithm ideally suited to run on parallel computing platforms such as GPUs.

We note that it is possible to further speed up the simulations of DRNs by computing the row-wise sums of the weight matrices (used at the denominator of the update rules) only once for each inference (steady state computation), instead of computing it at each iteration of the algorithm.

\section{Simulations}
\label{sec:simulations}

We use our exact block coordinate descent algorithm to train deep resistive networks (DRNs) with equilibrium propagation (EP) \citep{scellier2017equilibrium} on the MNIST classification task. We train DRNs of one, two and three hidden layers (each comprising 1024 units), denoted DRN-1H, DRN-2H and DRN-3H, respectively. We also train another two DRNs with a single hidden layer each, compring 32784 units and 100 units, denoted DRN-XL and DRN-XS, respectively. The DRN-XS model has the same architecture as in \citet{kendall2020training}, while the DRN-XL model is 325x larger. EP is presented in Appendix~\ref{sec:equilibrium-propagation} and a full description of the DRN models and the hyperparameters used for training are provided in Appendix~\ref{sec:simulation-details}.\footnote{The code to reproduce the results is available at \url{https://github.com/rain-neuromorphics/energy-based-learning}}

As a baseline for EP, we also trained DRNs with a version of backpropagation (BP) described in Appendix~\ref{sec:simulation-details}. This version of BP is only applicable in simulations and cannot be implemented on analog hardware in any obvious way, as it requires backpropagating through the trajectory of configurations obtained by our exact block coordinate descent algorithm.

\begin{table}[ht!]
\caption{We trained five deep resistive network (DRN) architectures (XS, XL, 1H, 2H and 3H) on MNIST. The size of the network, as measured per the number of weights (in millions), is written in brackets. Alg refers to the training algorithm: equilibrium propagation (EP) or backpropagation (BP). Our exact block coordinate descent method was used to compute the DRN steady states. Test refers to the test error rates (in \%). For each experiment, we performed five runs and we reported the mean values and std values. We also reported the number of epochs of training and the wall-clock time (WCT). A full description of the models with the hyperparameters used for training are reported in Table~\ref{table:hyperparams} of Appendix~\ref{sec:simulation-details}. We also report the results of the SPICE-based simulations of \citet{kendall2020training} used as a baseline, denoted SPICE XS. 
}
\label{table:results}
\vskip 0.15in
\begin{center}
\begin{small}
\begin{sc}
\begin{tabular}{ccccc}
\toprule
DRN & Alg & Test (\%) & Epochs & WCT \\
\midrule
SPICE XS & \multirow{2}{*}{EP}
& \multirow{2}{*}{3.43} & \multirow{2}{*}{10} & \multirow{2}{*}{1 week} \\
(0.16M) & & & & \\
\midrule
\multirow{2}{*}{XS (0.16M)} & EP & 3.46 $\pm$ 0.07 & 10 & 0:30 \\
& BP & 3.30 $\pm$ 0.13 & 10 & 0:32 \\
\multirow{2}{*}{XL (51.7M)} & EP & 1.33 $\pm$ 0.02 & 100 & 10:27 \\
& BP & 1.30 $\pm$ 0.03 & 100 & 8:19 \\
\multirow{2}{*}{1h (1.6M)} & EP & 1.57 $\pm$ 0.07 & 50 & 2:36 \\
& BP & 1.54 $\pm$ 0.04 & 50 & 2:48 \\
\multirow{2}{*}{2h (2.7M)} & EP & 1.48 $\pm$ 0.05 & 50 & 4:29 \\
& BP & 1.45 $\pm$ 0.08 & 50 & 4:53 \\
\multirow{2}{*}{3h (3.7M)} & EP & 1.66 $\pm$ 0.09 & 50 & 6:57 \\
& BP & 1.50 $\pm$ 0.07 & 50 & 7:55 \\
\bottomrule
\end{tabular}
\end{sc}
\end{small}
\end{center}
\vskip -0.1in
\end{table}

Table~\ref{table:results} shows the results.

The largest network that we train (the DRN-XL model) has 327 times as many parameters as the DRN-XS architecture used in \citet{kendall2020training} (51.7M vs 0.16M). We train it for 10 times as many epochs (100 vs 10) and the total duration of training is 16 times shorter (10 hours 27 min vs 1 week). Thus, our network-size-to-epoch-duration ratio is 50000x larger. The performance obtained with the DRN-XL model is also significantly better (1.33\% vs 3.43\% test error rate).

For all the simulations, we used mini-batches of size 4, because we found that this batch size yields the best results (test error rate). However, the simulation times of the DRN-XS, -1H, -2H and -3H models can be significantly reduced by using larger batch sizes.

\section{Discussion}

As nonlinear resistive networks have recently attracted interest as physical (analog) self-learning machines, efficiently simulating these networks has become essential to advance research in this direction. Previous works either used general purpose circuit simulators such as SPICE -- which are extremely slow as they were not specifically conceived for the simulations of nonlinear resistive networks -- or resorted to linear networks, which are easier to simulate but lack a crucial feature of machine learning: nonlinearity. In this work, we have introduced a methodology specifically tailored for the simulations of nonlinear resistive networks. We have shown that the problem of determining the steady state of a nonlinear resistive network can be formulated as a convex quadratic programming (QP) problem, where the objective function is the power dissipated in the resistors, and the feasible set of node electrical potentials reflects the inequality constraints imposed by the diodes. Using this formulation, we have introduced an efficient algorithm to simulate nonlinear resistive networks with arbitrary network topologies, based on `exact coordinate descent' \citep{wright2015coordinate}. In the case of layered networks - deep resistive networks (DRNs) - we took advantage of the bipartite structure of the network to derive an exact block coordinate descent algorithm where half of the coordinates (node electrical potentials) are updated at each step. Each step of our algorithm involves solely matrix-vector multiplications, divisions and clipping, making it ideal to run on parallel processors such as GPUs. Compared to the SPICE-based simulations of \citet{kendall2020training}, the largest networks that we trained are 327 times larger, and simulating them (on a single A100 GPU) was 160 times faster. Training larger networks for more epochs also helped us achieve significantly better results on the MNIST dataset (1.33\% vs 3.43\% test error rate).

Although we have primarily focused on layered architectures (DRNs), our algorithm applies to arbitrary network topologies, including unstructured (disordered) networks \citep{stern2022physical,stern2024training,wycoff2022desynchronous}. In such networks, parallelization is also possible, although it is less straightforward to implement. More generally, our characterization of the steady state as the solution of a quadratic programming (QP) problem with linear constraints offers other options to simulate nonlinear resistive networks. Indeed, QP problems can be solved with various algorithms, such as primal-dual interior point methods (IPM) and sequential quadratic programming (SQP) methods. Many optimization libraries and software packages provide specialized QP solvers. Besides, while in our simulations we have used equilibrium propagation (EP) for training, our exact coordinate descent algorithm for nonlinear resistive networks can be used in conjunction with other learning algorithms such as `frequency propagation' \citep{anisetti2024frequency} and `agnostic EP' \citep{scellier2022agnostic}. It can also be used in conjunction with e.g. the method proposed by \citet{stern2024training} to mitigate power dissipation.

While our methodology to simulate nonlinear resistive networks is fairly general, it is also important to acknowledge the limitations of our approach. Our methodology is anchored in an assumption of ideality of the circuit elements (resistors, diodes, voltage sources and current sources). Real-world diodes, however, deviate significantly from the ideal model studied here: they have a forward voltage drop, reverse leakage current, non-zero resistance when forward-biased, and a breakdown voltage when reverse-biased. Despite these assumptions, which may seem overly simplistic, we believe that our algorithm can be usefully applied to prototype and explore a wide range of network topologies. Our methodology can also be extended to the simulations of electrical networks composed of elements possessing more realistic i-v curves (e.g. piece-wise linear i-v curves), but we leave this for future work to investigate.

Looking forward, our simulation methodology can foster more rapid progress in nonlinear resistive network research. It offers the perspective to perform large scale simulations of deep resistive networks (DRNs), to enable further assessment of the scalability of such physical (analog) self-learning machines on more complex tasks. In this respect, a related line of works on continuous Hopfield networks holds promise. Nonlinear resistive networks and DRNs are, indeed, closely related to continuous Hopfield networks \citep{hopfield1984neurons} and their layered version, the `deep Hopfield network' (DHN) \citep{scellier2017equilibrium}. In particular, our exact coordinate descent algorithm for resistive networks is similar in spirit to the asynchronous update scheme for Hopfield networks, and our exact block coordinate descent algorithm for DRNs is similar to the asynchronous energy minimization scheme for DHNs \citep{scellier2023energy} and to block-Gibbs sampling in deep Boltzmann machines \citep{salakhutdinov2009deep}. Importantly, unlike Hopfield networks and Boltzmann machines whose energy functions are typically non-convex, the energy function (power dissipation) of a nonlinear resistive network is convex. In practice we find that our DRN simulations require much fewer steps to converge than in DHNs. What is especially interesting is that recent works have shown that DHNs yield promising results on image classification tasks such as CIFAR-10 \citep{laborieux2021scaling}, CIFAR-100 \citep{scellier2023energy} and ImageNet 32x32 \citep{laborieux2022holomorphic}. A more detailed analysis of the similarities and differences between resistive networks and continuous Hopfield networks is provided in Appendix~\ref{sec:continuous-hopfield-networks}.

\section*{Acknowledgements}

The author thanks the anonymous reviewers of ICML for their useful comments, as well as Sid Mishra, Jack Kendall, Maxence Ernoult, Mohammed Fouda, Suhas Kumar and Jeremie Laydevant for useful feedback and discussions.

\section*{Impact Statement}

This work presents an algorithm that could enable large-scale modelling of energy-efficient hardware for machine learning (ML), and therefore facilitate the development of such hardware. Such energy-efficient hardware would significantly reduce the costs associated with inference and training of ML models, and therefore make advanced AI technologies more affordable and accessible to a wider range of populations and industrial sectors. Such hardware would also open the door to train and deploy significantly larger AI models.

\bibliographystyle{icml2024}
\bibliography{biblio}

\clearpage
\appendix

\onecolumn

\newpage
\section{Proofs of Theorem \ref{thm:convex-qp-formulation} and Theorem \ref{thm:coordinate-descent}}
\label{sec:proofs}

In this appendix, we prove Theorem \ref{thm:convex-qp-formulation} and Theorem \ref{thm:coordinate-descent}.

First we recall our assumptions about the behaviour of individual devices. We consider four types of elements: linear resistors, diodes, voltage sources and current sources, with the following current-voltage (i-v) characteristics:
\begin{itemize}
\item A linear resistor follows Ohm's law, i.e. $i = g v$ where $g$ is the conductance of the resistor ($g=1/r$ where $r$ is the resistance).
\item A diode satisfies $i=0$ for $v \leq 0$, and $v=0$ for $i>0$.
\item A voltage source satisfies $v=v_0$ for some constant $v_0$, regardless of $i$.
\item A current source satisfies $i = i_0$ for some constant $i_0$, regardless of $v$. 
\end{itemize}

\subsection{Proof of Theorem \ref{thm:convex-qp-formulation}}

We denote $v^{\rm VS}$ as the set of voltages across voltage sources, and $i^{\rm CS}$ as the set of currents across current sources. For clarity, we restate Theorem \ref{thm:convex-qp-formulation}.

\convexqpformulation*

\begin{proof}[Proof of Theorem \ref{thm:convex-qp-formulation}]
Suppose there exists a configuration of branch voltages and branch currents that satisfies all the current-voltage equations in every branch, as well as Kirchhoff's current law (KCL) and Kirchhoff's voltage law (KVL). We call this configuration the `steady state' and denote $v_\star=(v_1,\ldots, v_N)$ the corresponding configuration of node electrical potentials, where $N$ is the number of nodes in the network. For every branch $(j,k) \in \mathcal{B}_{\rm R}$ (resp. $\mathcal{B}_{\rm D}$, $\mathcal{B}_{\rm VS}$), we denote $i_{jk}^{\rm R}$ (resp. $i_{jk}^{\rm D}$, $i_{jk}^{\rm VS}$) the current through it. The resistor's equation (Ohm's law) imposes that
\begin{equation}
\forall (j,k)\in \mathcal{B}_{\rm R}, \qquad i_{jk}^{\rm R} = g_{jk} (v_j - v_k).
\end{equation}
Let us define the functions $G_{jk}$ (for $(j,k)\in \mathcal{B}_{\rm D}$) and $H_{jk}$ (for $(j,k)\in \mathcal{B}_{\rm VS}$)  as
\begin{equation}
\label{eq:GH}
\forall (j,k)\in \mathcal{B}_{\rm D}, \quad G_{jk}(v) := v_j - v_k \qquad \text{and} \qquad \forall (j,k)\in \mathcal{B}_{\rm VS}, \quad H_{jk}(v) := v_j - v_k - v_{jk}^{\rm VS}.
\end{equation}
The diode equations impose that
\begin{equation}
\label{eq:D}
\forall (j,k)\in \mathcal{B}_{\rm D}, \qquad G_{jk}(v) \leq 0, \qquad i_{jk}^{\rm D} \geq 0, \qquad i_{jk}^{\rm D} \cdot G_{jk}(v) = 0.
\end{equation}
The voltage source equality constraints impose that
\begin{equation}
\label{eq:VS}
\forall (j,k)\in \mathcal{B}_{\rm VS}, \qquad H_{jk}(v) = 0.
\end{equation}
Next, let $k$ be a node such that $1 \leq k \leq N$, and consider all the branches connected to node $k$. These branches may include resistors, current sources, diodes and voltage sources. KCL applied to node $k$ reads
\begin{equation}
\label{eq:KCL}
\sum_{j:(j,k)\in \mathcal{B}_{\rm R}}  i_{jk}^{\rm R} + \sum_{j:(j,k)\in \mathcal{B}_{\rm CS}} i_{jk}^{\rm CS} + 
\sum_{j:(j,k)\in \mathcal{B}_{\rm D}} i_{jk}^{\rm D} +
\sum_{j:(j,k)\in \mathcal{B}_{\rm VS}} i_{jk}^{\rm VS} = 0.
\end{equation}
Let us introduce the following function $L$, which is a function of the node electrical potentials ($v$), the currents through the diodes ($i^{\rm D}$), and the currents through the voltage sources ($i^{\rm VS}$),
\begin{equation}
L(v, i^{\rm D}, i^{\rm VS}) := E(v) + \sum_{(j,k)\in \mathcal{B}_{\rm D}} i_{jk}^{\rm D} G_{jk}(v) + \sum_{(j,k)\in \mathcal{B}_{\rm VS}} i_{jk}^{\rm VS} H_{jk}(v).
\end{equation}
Using the expressions of the energy function \eqref{eq:energy-function} and the functions $G_{jk}$ and $H_{jk}$ from \eqref{eq:GH}, KCL \eqref{eq:KCL} can be rewritten in terms of $L$ as 
\begin{equation}
\label{eq:stationarity}
\nabla_v L(v,i_{jk}^{\rm D},i_{jk}^{\rm VS}) = 0.
\end{equation}
Equations \eqref{eq:D}, \eqref{eq:VS} and \eqref{eq:stationarity} fully characterize the steady state of the nonlinear resistive network. These equations constitute the set of Karush-Kuhn-Tucker (KKT) conditions associated with the constrained optimization problem of Eq.~\eqref{eq:energy-minimum}. $L$ is the Lagrangian, and the currents $i^{\rm D}$ and $i^{\rm VS}$ are the Lagrange multipliers associated with the inequality and equality constraints, respectively. Equations \eqref{eq:D} and \eqref{eq:VS} constitute the primal feasibility condition (ensuring $v$ belongs to the feasible set $\mathcal{S}$), the dual feasibility condition (ensuring the Lagrange multipliers associated with the inequality constraints are non-negative), and the complementary slackness for inequality constraints (either the constraint is `binding' or the associated Lagrange multiplier is zero). Finally, \eqref{eq:stationarity} is the stationarity condition of the Lagrangian.

From the above analysis, it follows that the steady state of the network (if it exists) satisfies the KKT conditions associated with the constrained optimization problem~\eqref{eq:energy-minimum}. Conversely, a configuration of node voltages and branch currents (diode branches and voltage source branches) that satisfies the KKT conditions is a steady state.

To conclude, we note that, as a sum of convex functions, the energy function $E$ is convex. Therefore the KKT conditions are equivalent to global optimality. This implies that a steady state exists if and only if the feasible set is non-empty, in which case the corresponding configuration of node voltages is a global minimum of the convex optimization problem.
\end{proof}

\subsection{Proof of Theorem \ref{thm:coordinate-descent}}

The optimization problem is a quadratic programming (QP) problem with linear constraints. We solve it using exact coordinate descent.

\coordinatedescent*

\begin{proof}[Proof of Theorem \ref{thm:coordinate-descent}]
The energy function, as a function of $v_k$, is a second-order polynomial of the form:
\begin{align}
E(v_k) & = \frac{1}{2} \sum_{j : (j,k) \in \mathcal{B}_{\rm R}} g_{jk} \left( v_j - v_k \right)^2 + \sum_{j : (j,k) \in \mathcal{B}_{\rm CS}} i_{jk}^{\rm CS} (v_j-v_k) \\
& = \underbrace{\left( \frac{1}{2} \sum_{j : (j,k) \in \mathcal{B}_{\rm R}} g_{jk} \right)}_{=: a_k} v_k^2 - \underbrace{\left( \sum_{j : (j,k) \in \mathcal{B}_{\rm R}} g_{kj} v_j + \sum_{j : (j,k) \in \mathcal{B}_{\rm CS}} i_{jk}^{\rm CS} \right)}_{=: b_k} v_k + \text{constant}
\end{align}
The minimum (in $\mathbb{R}$) of this second-order polynomial is achieved at the point $v_k = p_k$, defined as
\begin{equation}
p_k := - \frac{b_k}{2 a_k} = \frac{\sum_{j : (j,k) \in \mathcal{B}_{\rm R}} g_{kj} v_j + \sum_{j : (j,k) \in \mathcal{B}_{\rm CS}} i_{jk}^{\rm CS}}{\sum_{j : (j,k) \in \mathcal{B}_{\rm R}} g_{jk}}.
\end{equation}
However, the range of feasible values for $v_k$ (for which the corresponding configurations are in the feasible set $\mathcal{S}$) is not $\mathbb{R}$, but $(v_k^{\rm min},v_k^{\rm max})$, where
\begin{equation}
v_k^{\rm min} := \max_{j: (j,k) \in \mathcal{B}_{\rm D}} v_j, \qquad v_k^{\rm max} := \min_{j: (k,j) \in \mathcal{B}_{\rm D}} v_j.
\end{equation}
It is easily checked that the value $v_k^\prime \in (v_k^{\rm min},v_k^{\rm max})$ that achieves the minimum of the above second-order polynomial (given other variables fixed) is
\begin{equation}
v_k^\prime := \min \left( \max \left( v_k^{\rm min}, p_k \right), v_k^{\rm max} \right).
\end{equation}
\end{proof}

\clearpage
\section{Equilibrium Propagation}
\label{sec:equilibrium-propagation}

In this appendix, we briefly present the equilibrium propagation (EP) algorithm used in section~\ref{sec:simulations} to train the deep resistive network (DRN). EP is a method for extracting the weight gradients of arbitrary cost functions in arbitrary energy-based systems, that is, systems that possess an energy function and seek an energy minimum \citep{scellier2017equilibrium} (see also \citet{scellier2021deep} for a physics-oriented presentation of EP). Besides the nonlinear resistive networks considered in this work, EP has been used in continuous Hopfield networks (see Appendix~\ref{sec:continuous-hopfield-networks}), Ising machines \citep{laydevant2024training}, coupled phase oscillators \citep{wang2024training} and spiking networks \citep{martin2021eqspike}. A variant of EP has also been used in elastic networks \citep{altman2023experimental}. Most recently, EP has been extended to quantum systems \citep{massar2024equilibrium,scellier2024quantum}.

\subsection{EP in Nonlinear Resistive Networks}
\label{sec:ep-nonlinear-resistive-network}

Consider a nonlinear resistive network with arbitrary topology, as described in Section~\ref{sec:nonlinear-resistive-network} (see Figure ~\ref{fig:augmented-circuit}, top). In this electrical network, a set of voltage sources serves as `input variables' with voltages set to the input values, while other branch voltages and currents settle to a steady state, i.e. a minimum of the energy function \eqref{eq:energy-function}, characterized by
\begin{equation}
v_\star := \underset{v \in \mathcal{S}}{\arg \min} \; E(v).
\end{equation}
A subset of these branches functions as `output variables', with their voltages used as the network's prediction.\footnote{For example, in the DRN architecture of Figure~\ref{fig:deep-resistive-network}, input voltage sources link input nodes to ground, and output branches link output nodes to ground.} We view this input-output function implemented by the network as being parameterized by the conductances of the resistors, while the diodes introduce nonlinearities. The goal is to adjust the conductance values of the resistors (the `trainable weights') so the network implements a desired input-output function. Several learning algorithms for such resistive networks have been proposed \citep{kendall2020training,dillavou2022demonstration,anisetti2024frequency} based on EP. In this work, we used EP for its simplicity and superior performance compared to other contrastive learning methods \citep{scellier2023energy}.

EP involves augmenting the network by adding a branch between each pair of output nodes. Each branch consists of a voltage source, a resistor and a switch in series (Figure ~\ref{fig:augmented-circuit}, bottom left). All output resistors share a common conductance value, $\beta > 0$. The switches have two states: they are either all open (the `free state') or all closed (the `nudge state'). In the free state, no current flows through these output branches, leaving the network's state unchanged. In the nudge state, however, currents flow through these output branches, influencing the network's state. Mathematically, in the nudge state, the network's energy function $E$ (defined in Theorem~\ref{thm:convex-qp-formulation}) is augmented by a term equal to the power dissipated in the output resistors, $\frac{1}{2} \sum_{(j,k) \in \mathcal{B}_{\rm O}} \beta \left( v_{jk} - y_{jk}  \right)^2$, with $\mathcal{B}_{O}$ denoting the set of output branches, and $y_{jk}$ the voltage across the voltage source of output branch $(j,k)$. Thus, in the nudge state, the total energy function (total power dissipation) of the augmented network is
\begin{equation}
F(\beta,v) := E(v) + \beta C(v),
\end{equation}
where
\begin{equation}
C(v) := \frac{1}{2} \sum_{(j,k) \in \mathcal{B}_{\rm O}} \left( v_{jk} - y_{jk}  \right)^2.
\end{equation}
Importantly, the function $C$ appearing in the `total energy function' is also the cost function that the network aims to minimize from a learning perspective, representing the squared error between the model prediction and the desired output. In the free state (when the output switches are all open), the energy function of the network is $F(0,v) = E(v)$.

EP operates as follows:
\begin{enumerate}
\item Pick an input-output pair $(x,y)$ from the training dataset. 
\item Set input voltage sources to the input values ($x$) and open all the output switches, allowing the network to reach the free state:
\begin{equation}
v_\star^0 := \underset{v \in \mathcal{S}}{\arg \min} \; F(0,v) = v_\star.
\end{equation}
\item Set output voltage sources to the desired output values ($y$) and close all the output switches, allowing the network to reach the nudge state:
\begin{equation}
v_\star^\beta := \underset{v \in \mathcal{S}}{\arg \min} \;  F(\beta,v).
\end{equation}
\item For every branch $(j,k)$ containing a variable resistor, update the conductance $g_{jk}$ according to the contrastive rule:
\begin{align}
\label{eq:contrastive-rule}
\Delta g_{jk} & = \frac{\eta}{\beta} \left[ \frac{\partial F}{\partial g_{jk}}(0,v_\star^0) - \frac{\partial F}{\partial g_{jk}}(\beta,v_\star^\beta) \right] \\
& = \frac{\eta}{2\beta} \left[ \left( v_{jk}^0 \right)^2 -  \left( v_{jk}^\beta \right)^2 \right],
\label{eq:contrastive-rule-resistive}
\end{align}
where $\eta > 0$ is the learning rate, $v_{jk}^0$ denotes the voltage across branch $(i,j)$ in the free state, and $v_{jk}^\beta$ denotes the voltage in the nudge state. We have used the explicit form of the energy function, as in Eq.~\eqref{eq:energy-function}.
\end{enumerate}

\begin{figure*}[h!]
\begin{center}
\includegraphics[width=\textwidth]{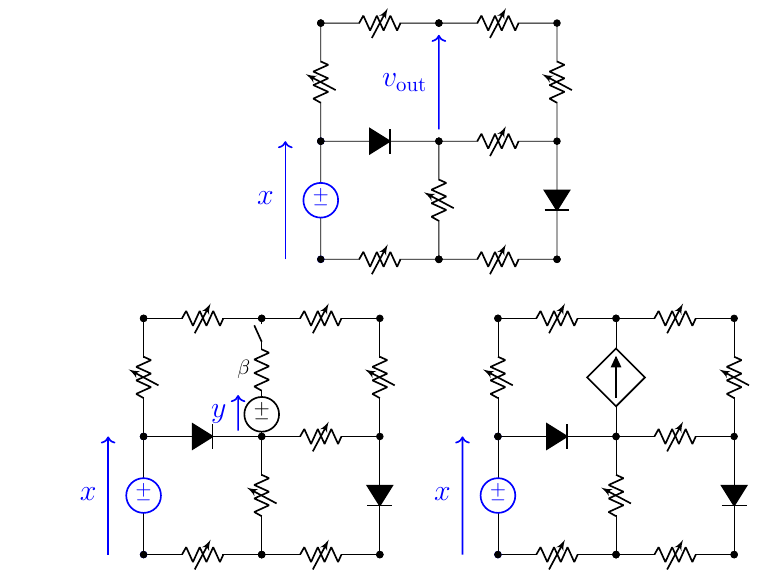}
\end{center}
\caption{
Training a nonlinear resistive network with equilibrium propagation (EP). \textbf{Top}. Input voltage $x$ is supplied to the network, and the output voltage $v_{\rm out}$ is measured. Diodes implement nonlinearities and variable resistors implement the `trainable weights'. \textbf{Bottom}. Two methods to implement `nudging' for EP learning. In the first method (\textbf{left}), a voltage source set to desired output $y$, a resistor of conductance $\beta$ and a switch are in series in the output branch. Closing the switch injects a current proportional to the prediction error, $i_{\rm out} = \beta (y-v_{\rm out})$. One caveat of this method is that the nudging parameter $\beta$ (the conductance) is necessarily positive. In the second method (\textbf{right}), a current source injects a current $i_{\rm out} = \beta (y-v_{\rm out})$ in the output branch, allowing for the use of a negative $\beta$.
\label{fig:augmented-circuit}
}
\end{figure*}

\subsection{Equilibrium Propagation Formulas}
\label{sec:ep-formulas}

The central result of EP is that the learning rule \eqref{eq:contrastive-rule} approximates one step of gradient descent on the cost function $C$ \citep{scellier2017equilibrium}. Specifically,
\begin{equation}
\Delta g_{jk} = - \eta \frac{\partial C(v_\star^0)}{\partial g_{jk}} + O(\beta)
\end{equation}
when $\beta \to 0$. More specifically, the learning rule \eqref{eq:contrastive-rule} performs one step of gradient descent on a surrogate function $\mathcal{L}_\beta$ that approximates the true cost function $C$ \citep{scellier2023energy}, i.e.
\begin{equation}
\Delta g_{jk} = - \eta \frac{\partial \mathcal{L}_\beta}{\partial g_{jk}} \qquad \text{where} \qquad \mathcal{L}_\beta = C(v_\star^0) + O(\beta) \quad \text{when} \quad \beta \to 0.
\end{equation}
The surrogate function $\mathcal{L}_\beta$ is the contrastive function defined as
\begin{equation}
\mathcal{L}_\beta := \frac{G(\beta)-G(0)}{\beta}, \qquad \text{where} \qquad G(\beta) := F(\beta,v_\star^\beta) = \min_{v \in \mathcal{S}} F(\beta,v).
\end{equation}
EP has been found to work better in practice when employing a negative $\beta$ in the nudge state, rather than a positive one \citep{scellier2023energy}. This can be explained as the contrastive function of EP is an upper bound of the true cost function when $\beta$ is negative, and a lower bound when it is positive, i.e.
\begin{equation}
\mathcal{L}_\beta \leq C(v_\star^0) \leq \mathcal{L}_{-\beta}, \qquad \forall \beta > 0.
\end{equation}
A centered version of EP, introduced earlier in \citet{laborieux2021scaling}, combines a negatively-perturbed state $v_\star^{-\beta}$ with a positively-perturbed one $v_\star^{+\beta}$, proving more effective. This can be explained as the contrastive function $\mathcal{L}_\beta^{\rm centered}$ of this centered variant approximates the true cost function to second order in $\beta$, i.e.
\begin{equation}
\mathcal{L}_\beta^{\rm centered} := \frac{G(\beta)-G(-\beta)}{2\beta} = C(v_\star^0) + O(\beta^2).
\end{equation}
This is the version of EP that we used in our simulations (Section~\ref{sec:simulations}).

Importantly, all these results also hold in the non-ideal regime where the network elements (diodes, voltage sources and current sources) have arbitrary i-v characteristics (not necessarily the i-v characteristics of Figure~\ref{fig:resistive-elements}). The only assumption, required to derive Eq.~\eqref{eq:contrastive-rule-resistive}, is that the resistors must follow Ohm's law (i.e. have the linear i-v characteristic depicted in Figure~\ref{fig:resistive-elements}). We refer to \citet{kendall2020training} for a proof.

The above formulas were proved in \citet{scellier2023energy} in the case where the state space is $\mathcal{S} = \mathbb{R^N}$. In Appendix~\ref{sec:ep-inequality}, we prove that these formulas still hold when the feasible set is defined by equality and inequality constraints.

\subsection{Implementation Issues in Hardware}

We conclude this Appendix by discussing some challenges related to hardware implementations.

One issue is that the implementation of the nudging proposed in Section~\ref{sec:ep-nonlinear-resistive-network} (Figure~\ref{fig:augmented-circuit}, bottom left) -- using a voltage source, a resistor and a switch in series in every output branch -- does not allow for negative nudging ($\beta < 0$) since a conductance is always non-negative. To achieve negative nudging, a possibility is to use current sources whose current values are set to $i_{jk}=-\beta (v_{jk}-y_{jk})$, where $\beta \in \mathbb{R}$ (Figure ~\ref{fig:augmented-circuit}, bottom right). This approach was proposed in \citet{kendall2020training}. In this method, the free state (inference) is obtained by setting the current sources to zero.

Another implementation issue concerns the contrastive learning rule \eqref{eq:contrastive-rule}, which requires accessing two different states of the system (free and nudge) to extract the weight gradients. In the experimental demonstration by \citet{dillavou2022demonstration,dillavou2023machine}, two copies of the network are coupled to allow access to both network states (free and nudge) simultaneously, enabling full analog training of the system. A caveat is that this coupled learning approach is subject to mismatches between the two copies of the network, which could cause deviations between measured weight gradients and true weight gradients, potentially affecting optimization. Other practical implementations of EP have resorted to using external memory to store the two states before performing the weight updates \citep{yi2023activity,laydevant2024training}, at the cost of increased energy consumption. Other solutions have been proposed to perform the weight updates using a single network, potentially without external memory. \citet{williams2023flexible} propose using only one of the two states, chosen at random at each step, to extract an unbiased (but high variance) estimator of the weight gradients. \citet{anisetti2024frequency} show that the weight gradients can be extracted using the mean and amplitude of the network response to a sinusoidal nudging signal. \citet{falk2023contrastive} show that the weight update can be performed using a non-equilibrium memory (time convolution of trajectory) of the network state, using integral feedback. Finally, a non-contrastive approach called `Agnostic EP' (AEP) was introduced in \citet{scellier2022agnostic}, where the parameter updates are performed through physical dynamics without relying on external measurement and feedback.

\clearpage
\section{Simulation Details}
\label{sec:simulation-details}

We provide the implementation details to reproduce the results of our simulations of DRNs trained by equilibrium propagation (EP) and backpropagation (BP). The code to reproduce the results is available at \url{https://github.com/rain-neuromorphics/energy-based-learning}

\paragraph{MNIST dataset.}
The MNIST dataset of handwritten digits \citep{lecun1998gradient} consists of 60,000 training examples and 10,000 test examples. Each example $x$ in the dataset is a $28 \times 28$ grayscale image, accomonpanied by a label $y \in \left\{ 0, 1, \ldots, 9 \right\}$ indicating the digit represented by the image.

\paragraph{Network architectures.}
We train five deep resistive network (DRN) models termed DRN-1H, DRN-2H, DRN-3H, DRN-XS and DRN-XL. Each DRN has 1568 input units ($2\times28\times28$) and 10 output units corresponding to the ten classes of the MNIST dataset. The DRN-kH model has k hidden layers of 1024 units each ($k \in \{ 1,2,3 \}$). The DRN-XS model has one hidden layer of 100 units. The DRN-XL model has one hidden layer of 32768 units. Table~\ref{table:hyperparams} provides the architectural details of the five DRN models, and the hyperparameters used to obtain the results presented in Table~\ref{table:results}.

\paragraph{Initialization of the conductances (`weights').} For two consecutive layers of sizes $N_{\ell}$ and $N_{\ell+1}$, the matrix of conductances between these layers is initialized using a modified `Kaiming uniform' initialization scheme:
\begin{equation}
g_{ij}^{(\ell+1)} = \max(0,w), \qquad w \sim \mathcal{U}(-c,+c), \qquad c = \sqrt{\frac{1}{N_{\ell}}}.
\end{equation}

\paragraph{Input amplification factor.} We recall that, in a DRN, not only is the number of input nodes (input voltage sources) doubled, but input signals are amplified by a fixed gain factor $A>0$. For each DRN model (DRN-1H, DRN-2H, DRN-3H, DRN-XS and DRN-XL), the choice of $A$ is reported in Table~\ref{table:hyperparams}. We note that the values of $A$ are relatively large, e.g. $A=4000$ for the DRN-3H model. Instead of amplifying input signals of a DRN by a large amplification factor $A \gg 1$, another option is to use bidirectional amplifiers at every layer -- see Appendix~\ref{sec:bidirectional-amplifiers} for details.

\paragraph{Energy minimization.}
To compute the network's steady state, we use our exact block coordinate descent algorithm for DRNs (Section~\ref{sec:block-coordinate-descent}). At every iteration, we first update the layers of even indices (one half of the layers), then we update the layers of odd indices (the other half). We repeat as many iterations as is necessary until convergence to the steady state. In Table~\ref{table:hyperparams}, $T$ denotes the number of iterations performed during inference (free phase), and $K$ denotes the number of iterations performed during the second (training) phase.

\paragraph{EP training procedure.} We trained our networks with equilibrium propagation (EP) and backpropagation (BP). First, we describe EP, as detailed in Appendix~\ref{sec:equilibrium-propagation}. At each training step, we proceed as follows. First we pick a mini-batch of samples from the training set, $x$, and their corresponding labels, $y$. Then we perform $T$ iterations of the block coordinate descent algorithm without nudging ($\beta=0$). We compute the training loss and training error rate for the current mini-batch, to monitor training. We also store the steady state (free state) $v_\star$. Next, we set the nudging parameter to $\beta > 0$ and we perform a new block-coordinate descent of $K$ iterations, to compute the positively-perturbed state $v_\star^\beta$. Then, we reset the network state to the free state $v_\star$, we set the nudging parameter to $-\beta$, and perform a new block-coordinate descent of $K$ iterations to compute the negatively-perturbed state $v_\star^{-\beta}$. Finally, we update all the conductances in proportion to the `centered EP' learning rule.

\paragraph{BP training procedure.} Similar to EP, we first perform $T$ iterations of block coordinate descent to compute the free state. Then we perform another $K$ steps of block coordinate descent (still using $\beta=0$) and backpropagate through these $K$ steps to compute the weight gradients. The procedure is equivalent to performing `truncated backprop through time' where we perform $T+K$ steps in the forward pass, and backpropagate through only the last $K$ steps in the backward pass. This baseline was also used e.g. in \citet{ernoult2019updates} and \citet{scellier2023energy}.

\paragraph{Optimizer and scheduler.}
We use standard mini-batch gradient descent (SGD) with a mini-batch size of 4. No momentum or weight decay is used. We use a scheduler with a learning rate decay of $0.99$ at each training epoch.

\paragraph{Computational resources.}
The code for the simulations uses PyTorch 1.13.1 and TorchVision 0.14.1. \cite{paszke2017automatic}. The simulations were carried out on a single Nvidia A100 GPU.

\begin{table}[ht!]
\caption{Hyper-parameters used for initializing and training the five DRN models (DRN-XS, DRN-XL, DRN-1H, DRN-2H and DRN-3H) to reproduce the results in Table~\ref{table:results}. LR stands for `learning rate'.}
\label{table:hyperparams}
\vskip 0.15in
\begin{center}
\begin{small}
\begin{sc}
\begin{tabular}{cccccc}
\toprule
& DRN-XS & DRN-XL & DRN-1H & DRN-2H & DRN-3H \\
\midrule
Input amplification factor ($A$) & 100 & 800 & 480 & 2000 & 4000 \\
Nudging ($\beta$) & 1.0 & 1.0 & 1.0 & 1.0 & 2.0 \\
Num. iterations at inference ($T$) & 4 & 4 & 4 & 5 & 6 \\
Num. iterations during training ($K$) & 4 & 4 & 4 & 5 & 6 \\
Layer 0 (input) size ($N_0$) & 2-28-28 & 2-28-28 & 2-28-28 & 2-28-28 & 2-28-28 \\
Layer 1 size ($N_1$) & 100 & 32768 & 1024 & 1024 & 1024 \\
Layer 2 size ($N_2$) & 10 & 10 & 10 & 1024 & 1024 \\
Layer 3 size ($N_3$) & &  &  & 10 & 1024 \\
Layer 4 size ($N_4$) & &  &  &  & 10 \\
lr weight 1 \& bias 1 ($\eta_1$) & 0.006 & 0.006 & 0.006 & 0.002 & 0.005 \\
lr weight 2 \& bias 2 ($\eta_2$) & 0.006 & 0.006 & 0.006 & 0.006 & 0.02 \\
lr weight 3 \& bias 3 ($\eta_3$) & &  &  & 0.018 & 0.08 \\
lr weight 4 \& bias 4 ($\eta_4$) & &  &  &  & 0.005 \\
lr decay & 0.99 & 0.99 & 0.99 & 0.99 & 0.99 \\
Mini-batch size & 4 & 4 & 4 & 4 & 4 \\
Number of epochs & 10 & 100 & 50 & 50 & 50 \\
\bottomrule
\end{tabular}
\end{sc}
\end{small}
\end{center}
\vskip -0.1in
\end{table}

\clearpage
\section{Bidirectional Amplifiers}
\label{sec:bidirectional-amplifiers}

One challenge with deep resistive networks (DRNs) is the decay in signal amplitude across the network's layers as depth increases. To compensate for this decay, in our construction, we amplify the input voltages by a large gain factor $A \gg 1$ to compensate for this decay. A solution to this problem, proposed in \citet{kendall2020training}, is to equip each `unit' with a `bidirectional amplifier'. In this appendix, we review this method.

A bidirectional amplifier is a three-terminal device, with bottom ($B$), left ($L$) and right ($R$) terminals. The bottom terminal is linked to ground. The current and voltage states of the left and right terminals, $(v_L,i_L)$ and $(v_R,i_R)$, satisfy the relationship $v_R = a \, v_L$ and $i_R = a \, i_L$, where $a$ is the gain of the bidirectional amplifier. See Figure~\ref{fig:bidirectional-amplifier}. The bidirectional amplifier amplifies the voltages in the forward direction by a factor $a>1$ and amplifies the currents in the backward direction by a factor $1/a$. In practice, a bidirectional amplifier can be created by combining a voltage-controlled voltage source (VCVS) and a current-controlled current source (CCCS), as described in \citet{kendall2020training}.

Next, we equip each unit with a bidirectional amplifier, defining the unit's state as the voltage after amplification (see Figure~\ref{fig:bidirectional-amplifier}). We assume that all units in a given layer $\ell$ use the same gain $a^{(\ell)}$. In this context, Theorem~\ref{thm:convex-qp-formulation} can be restated as follows. For each layer $\ell$ we define
\begin{equation}
c^{(\ell)} := 1 \times a^{(1)} \times a^{(2)} \times \ldots \times a^{(\ell)}.
\end{equation}
The energy function $E$ from Eq.~\eqref{eq:energy-function-drn} becomes
\begin{equation}
E(v) = \frac{1}{2} \sum_{\ell=1}^{L} \sum_{j=1}^{N_{\ell-1}} \sum_{k=1}^{N_\ell} g_{jk}^{(\ell)} \left( \frac{v_j^{(\ell-1)}}{c^{(\ell-1)}} - \frac{v_k^{(\ell)}}{c^{(\ell)}} \right)^2,
\end{equation}
while the feasible set $\mathcal{S}$ from Eq.~\eqref{eq:feasible-set-drn} remains unchanged:
\begin{equation}
\mathcal{S} = \{ v \in \mathbb{R}^{\sum_{\ell=1}^L N_\ell} \mid v_k^{(\ell)} \geq 0 \text{ if k is even}, v_k^{(\ell)} \leq 0 \text{ if k is odd}, 1 \leq \ell \leq L-1, 1 \leq k \leq N_\ell \}.
\end{equation}
The update rule for $v_k^{(\ell)}$ from Eq.~\eqref{eq:update-rule-hidden-drn} becomes:
\begin{equation}
v_k^{(\ell)} \leftarrow 
\left\{
\begin{array}{l}
\max \left( 0, p_k^{(\ell)} \right) \qquad \text{if k is even}, \\
\min \left( 0, p_k^{(\ell)} \right) \qquad \text{if k is odd},
\end{array}
\right.
\qquad \text{where} \qquad
p_k^{(\ell)} := \frac{\sum_{j=1}^{N_{\ell-1}} g_{jk}^{(\ell)} a^{(\ell)} v_j^{(\ell-1)} + \sum_{j=1}^{N_{\ell+1}} g_{kj}^{(\ell+1)} \frac{v_j^{(\ell+1)}}{a^{(\ell+1)}}}{\sum_{j=1}^{N_{\ell-1}} g_{jk}^{(\ell)} + \sum_{j=1}^{N_{\ell+1}} g_{kj}^{(\ell+1)}}.
\end{equation}
The update rule for the output layer from Eq.~\eqref{eq:update-rule-output-drn} becomes:
\begin{equation}
v_k^{(L)} \leftarrow 
\frac{\sum_{j=1}^{N_{\ell-1}} g_{jk}^{(L)}  a^{(L)} v_j^{(L-1)} + \beta y_k}{\sum_{j=1}^{N_{\ell-1}} g_{jk}^{(L)} + \beta}.
\end{equation}
The learning rule for the conductances of a DRN with bidirectional amplifiers also needs slight modification. For the conductance between unit $j$ of layer $\ell-1$ and unit $k$ of layer $\ell$, we denote:
\begin{equation}
g_{jk} := g_{jk}^{(\ell)}, \qquad \widetilde{v}_j := \frac{v_j^{(\ell-1)}}{c^{(\ell-1)}}, \qquad \widetilde{v}_k := \frac{v_k^{(\ell)}}{c^{(\ell)}}.
\end{equation}
Then the learning rule reads:
\begin{equation}
\Delta g_{jk} = \frac{\eta}{2\beta} \left[ \left( \widetilde{v}_j^0 - \widetilde{v}_k^0 \right)^2 -  \left( \widetilde{v}_j^\beta - \widetilde{v}_k^\beta \right)^2 \right],
\end{equation}
where $\widetilde{v}^0$ and $\widetilde{v}^\beta$ denote the free state and nudged state values, respectively. If bidirectional amplifiers are removed from the network (choosing $a^{(\ell)}=1$ for all $\ell$), then $c^{(\ell)}=1$ for all $\ell$, and the original formulas from Section~\ref{sec:deep-resistive-network} and Appendix~\ref{sec:equilibrium-propagation} are recovered.

\begin{figure}
\begin{center}
\fbox{
\includegraphics[width=0.4\textwidth]{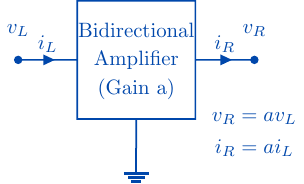}
}
\vspace{0.5cm}
\includegraphics[width=0.95\textwidth]{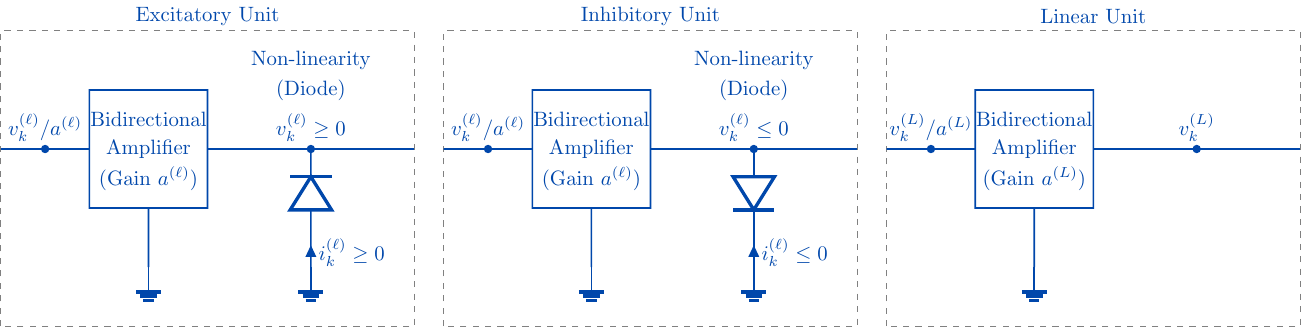}
\fbox{
\includegraphics[width=0.95\textwidth]{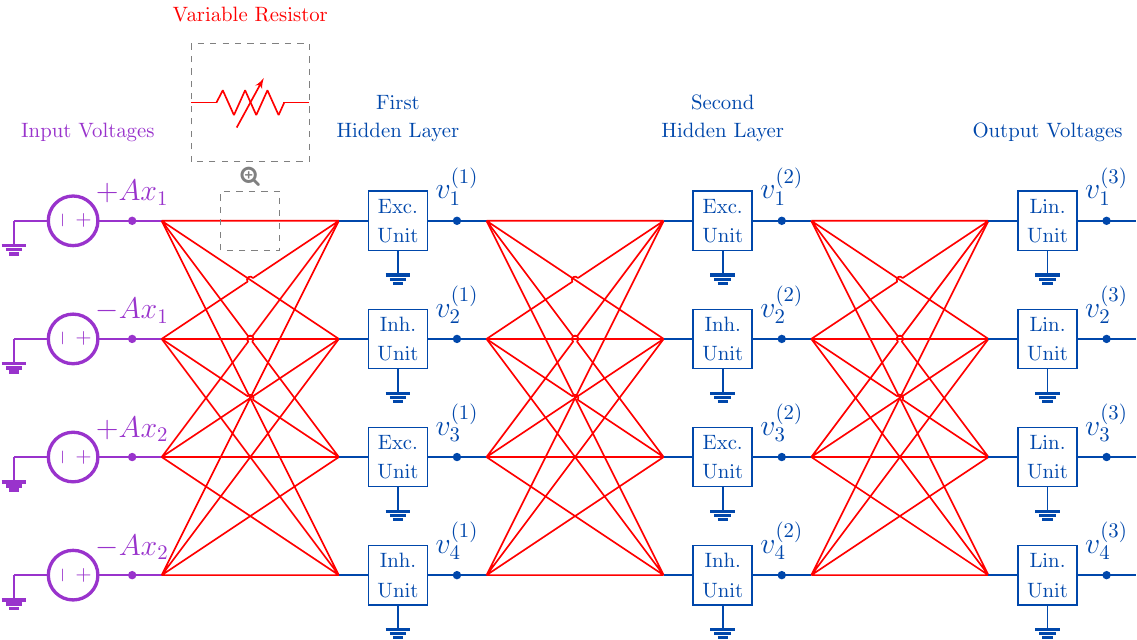}
}
\end{center}
\caption{
\textbf{Top.} Bidirectional amplifier with gain $a$. The right terminal voltage ($v_R$) is related to the left terminal voltage ($v_L$) by $v_R = a v_L$, where $a$ is a gain factor. The left terminal current ($i_L$) is related to the right terminal current ($i_R$) by $i_L = \frac{1}{a} i_R$.
\textbf{Middle.} A unit consisting of a bidirectional amplifier, possibly followed by a diode between the unit's node and ground. Units come in two types: excitatory and inhibitory, depending on the diode's orientation.
\textbf{Bottom.} A deep resistive network with bidirectional amplifiers.
\label{fig:bidirectional-amplifier}
}
\end{figure}

\clearpage
\section{Continuous Hopfield Networks}
\label{sec:continuous-hopfield-networks}

In this appendix, we discuss the similarities and differences between resistive networks and continuous Hopfield networks (CHNs). First, we present the CHN model and its layered version, the deep Hopfield network (DHN). Then, we show that nonlinear resistive networks can be cast as CHNs. Finally, we compare DRNs and DHNs through simulations.

\subsection{Continuous Hopfield Networks and Deep Hopfield Networks}

The continuous Hopfield network (CHN), introduced by \citet{hopfield1984neurons}, is inspired by the Ising model of spin glasses. A CHN consists of continuous-valued units ($s_k$) interconnected by bidirectional weights ($w_{jk}$). Each unit $k$ has an associated bias ($b_k$), and a nonlinear activation function $f_k : \mathbb{R} \to \mathbb{R}$, assumed to be increasing. The network's state is represented by the vector $s=(s_1,s_2,\ldots,s_N)$, where $N$ is the number of units. The energy function of the network, called the `Hopfield energy function', is defined as
\begin{equation}
\label{eq:hopfield-energy}
E_{\rm Hopfield}(s_1,\ldots,s_N) := \sum_k \underbrace{\int_0^{s_k} f_k^{-1}(u)du}_{:= \Phi_k(s_k)} - \sum_{j < k} w_{jk} s_j s_k - \sum_k b_k s_k.
\end{equation}
Since the nonlinear activation function $f_k$ is increasing, its inverse function $f_k^{-1}$ is also increasing, making $\Phi_k$ convex. This implies that the function $s_k \mapsto E_{\rm Hopfield}(s_1,\ldots,s_N)$ is convex, given the state of other units fixed. The value of $s_k$ that minimizes this function satisfies
\begin{align}
\partial_{s_k} E_{\rm Hopfield}(s_1,\ldots,s_N) = 0 \quad & \Longleftrightarrow \quad f_k^{-1}(s_k) - \sum_{j \neq k} w_{jk} s_j - b_k = 0 \\
& \Longleftrightarrow \quad s_k = f_k \left( \sum_{j \neq k} w_{jk} s_j + b_k \right).
\label{eq:hopfield-update}
\end{align}
Thus, the Hopfield energy function can be minimized via exact coordinate descent, where each step updates one unit $k$ according to \eqref{eq:hopfield-update} -- similar to our algorithm for nonlinear resistive network (Theorem~\ref{thm:coordinate-descent}).

In the CHN model of \citet{scellier2017equilibrium} where equilibrium propagation was first illustrated, the nonlinear activation function is the `hard sigmoid'
\begin{equation}
\label{eq:hard-sigmoid}
f_k(s_k) := \min(\max(0,s_k),1), \qquad 1 \leq k \leq N.
\end{equation}
The term $\Phi_k(s_k)$ in the Hopfield energy function \eqref{eq:hopfield-energy} becomes
\begin{equation}
\Phi_k(s_k) := \int_0^{s_k} f_k^{-1}(u)du = \left\{
\begin{array}{ll}
+\infty & \text{if} \quad s_k \leq 0, \\
s_k^2   & \text{if} \quad 0 \leq s_k \leq 1, \\
+\infty & \text{if} \quad 1 \leq s_k.
\end{array}
\right.
\end{equation}
In this context, the steady state of the CHN can be cast as the solution of the following constrained minimization problem:
\begin{equation}
s_\star = \underset{s \in \mathcal{S}_{\rm Hopfield}}{\arg \min} \; E_{\rm Hopfield}^{\rm ideal}(s),
\end{equation}
where $E_{\rm Hopfield}^{\rm ideal}$ is the Hopfield energy function corresponding to the hard sigmoid activation function \eqref{eq:hard-sigmoid}, and $\mathcal{S}_{\rm Hopfield}^{\rm ideal}$ is the set of feasible configurations:
\begin{gather}
\label{eq:ideal-hopfield-energy}
E_{\rm Hopfield}^{\rm ideal}(s_1, \ldots, s_N) := \frac{1}{2} \sum_k  s_k^2 - \sum_{(j,k)} w_{jk} s_j s_k - \sum_k b_k s_k, \\
\label{eq:feasible-set-chn}
\mathcal{S}_{\rm Hopfield}^{\rm ideal} := \{ s=(s_1,s_2,\ldots,s_N) \in \mathbb{R}^N \mid 0 \leq s_k \leq 1, \quad 1 \leq k \leq N \}.
\end{gather}
Although the Hopfield energy function is convex in $s_k$ (given the other units fixed), it is not necessarily convex in the overall state $s = (s_1, s_2, \ldots s_N)$. This non-convexity leads to the multi-stability issue (discussed by \citet{wang2024training} in the context of coupled phase oscillators). In contrast, nonlinear resistive networks have a convex energy function, resulting in a unique steady state (Theorem~\ref{thm:convex-qp-formulation}).

The units of a CHN can be organised into layers to form a `deep Hopfield network' (DHN). Similar to the deep resistive network (DRN), the bipartite structure of a DHN allows for a block coordinate descent method, where half of the units are updated in parallel at each step -- either the layers of odd index or the layers of even index. This method, used in the simulations of \citet{scellier2023energy}, is akin to the block Gibbs sampling procedure for deep Boltzmann machines \citep{salakhutdinov2009deep}.

\subsection{Casting Resistive Networks as Continuous Hopfiled Networks}

Next we show that a nonlinear resistive network can be cast as a CHN. First we recall the energy function of a nonlinear resistive network (Theorem~\ref{thm:convex-qp-formulation}):
\begin{equation}
E_{\rm resistive} := \frac{1}{2} \sum_{(j,k) \in \mathcal{B}_{\rm R}} g_{jk} \left( v_j - v_k \right)^2 + \sum_{(j,k) \in \mathcal{B}_{\rm CS}} i_{jk}^{\rm CS} \left( v_j - v_k \right).
\end{equation}
We define $G_j$ as the sum of conductances linked to node $j$, and $I_j$ as the sum of currents flowing to node $j$ through current sources:
\begin{equation}
G_j := \sum_{k: (j,k) \in \mathcal{B}_{\rm R}} g_{jk}, \qquad I_j := \sum_{k: (j,k) \in \mathcal{B}_{\rm CS}} i_{jk}^{\rm CS}.
\end{equation}
Then, we perform a change of variables:
\begin{equation}
s_j := \sqrt{G_j} v_j, \qquad w_{jk} := \frac{g_{jk}}{\sqrt{G_j G_k}}, \qquad b_j := \frac{I_j}{\sqrt{G_j}},
\end{equation}
so the energy function of the resistive network rewrites in terms of the new variables:
\begin{equation}
E_{\rm resistive} = \frac{1}{2} \sum_j s_j^2 - \sum_{j,k} w_{jk} s_j s_k - \sum_k b_k s_k.
\end{equation}
This is the energy function of an ideal CHN (Eq.~\eqref{eq:ideal-hopfield-energy}). Moreover, the constraints on the electrical potentials ($v_k$) imposed by the diodes and voltage sources (Eq.~\eqref{eq:feasible-set}) translate into equality and inequality constraints on the new variables ($s_k$), similar (though not identical) to those of the feasible set of the CHN (Eq.~\eqref{eq:feasible-set-chn}). This shows that (ideal) nonlinear resistive networks are, in some sense, a subclass of (ideal) CHNs.

The converse statement is not true: a CHN with a non-convex energy function cannot be converted into an electrical network composed solely of resistors, diodes, voltage sources and current sources. Such electrical networks have a convex energy function (Theorem~\ref{thm:convex-qp-formulation}).

\subsection{Comparison of DRNs and DHNs in Simulations}

Finally, we compare DRNs and deep Hopfield networks (DHNs) of the same size through simulations. Using equilibrium propagation, we train DHNs of one, two and three hidden layers (each comprising 1024 units), denoted DHN-1H, DHN-2H and DHN-3H, respectively, on the MNIST dataset. Similar to the DRN simulations in Table~\ref{table:results}, we use exact block coordinate descent to compute the steady state of the DHN. This contrasts with the DHN simulations of \citet{scellier2017equilibrium} where the steady state was obtained using gradient descent in the feasible set.

\begin{table}[ht!]
\caption{Comparison of deep Hopfield network (DHN) and deep resistive network (DRN) architectures on MNIST using equilibrium propagation (EP) for training. Test error rates (in \%) are reported as mean $\pm$ standard deviation over five runs. We also report the number of iterations of exact block coordinate descent to compute the free state ($T$) and the nudge state ($K$).}
\label{table:hopfield-networks}
\vskip 0.15in
\begin{center}
\begin{small}
\begin{sc}
\begin{tabular}{ccccccc}
\toprule
& \multicolumn{3}{c}{DRN} & \multicolumn{3}{c}{DHN} \\
\cmidrule(r){2-4} \cmidrule(r){5-7}
& T & K & Test error (\%) & T & K & Test error (\%) \\
\midrule
1h & 4 & 4 & 1.57 $\pm$ 0.07 & 15 & 15 & 1.79 $\pm$ 0.06 \\
2h & 5 & 5 & 1.48 $\pm$ 0.05 & 50 & 20 & 1.65 $\pm$ 0.05 \\
3h & 6 & 6 & 1.66 $\pm$ 0.09 & 100 & 10 & 1.65 $\pm$ 0.02 \\
\bottomrule
\end{tabular}
\end{sc}
\end{small}
\end{center}
\vskip -0.1in
\end{table}

Table~\ref{table:hopfield-networks} shows the results. We observe that DRNs require fewer iterations of exact block coordinate descent than their equivalent-size DHNs to converge to a steady state. This can be explained by their convex energy function, while DHNs may have non-convex energy functions. We also observe that DRNs achieve similar (or even slightly better) performance compared to DHNs. This indicates that the convexity of the energy function in DRNs does not limit their computational expressivity. In fact, \citet{scellier2023universal} have proven a universal approximation theorem for DRNs.

\clearpage
\section{Equilibrium Propagation with Inequality Constraints}
\label{sec:ep-inequality}

The equilibrium propagation (EP) formulas detailed in Appendix~\ref{sec:ep-formulas} were proved in \citet{scellier2017equilibrium,scellier2023energy}. These derivations assumed that the equilibrium state $s_\star$ satisfies the stationary condition $\frac{\partial E}{\partial s}(s_\star) = 0$. However, this condition may not hold when the feasible set is defined by inequality constraints, as seen in the settings of nonlinear resistive networks (Section~\ref{sec:nonlinear-resistive-network}) and continuous Hopfield networks (Appendix~\ref{sec:continuous-hopfield-networks}). In these settings, the stationary condition is satisfied when the equilibrium state $s_\star$ is within the interior of the feasible region $\mathcal{S}$, but it may not hold when $s_\star$ lies on the boundary of $\mathcal{S}$. In this appendix, we demonstrate that the EP formulas remain valid even when the feasible set is defined by inequality constraints.

Let $\Theta := \mathbb{R}^m$ denote the parameter space, and
\begin{equation}
\label{eq:feasible-set-general}
\mathcal{S} := \{ s \in \mathbb{R}^n \mid f_1(s) \leq 0, \ldots, f_K(s) \leq 0 \}
\end{equation}
denote the feasible set, where $f_k: \mathbb{R}^n \to \mathbb{R}$ is a smooth function for each $k$. Given two functionals $C: \Theta \times \mathcal{S} \to \mathbb{R}$ and $E: \Theta \times \mathcal{S} \to \mathbb{R}$, called `cost function' and `energy function', respectively, we consider the following bilevel optimization problem \citep{zucchet2022beyond}:
\begin{gather}
\text{find} \; \min_{\theta \in \Theta} C(\theta,s(\theta)), \\
\text{subject to } s(\theta) := \underset{s \in \mathcal{S}}{\arg \min} \; E(\theta,s).
\label{eq:free-state}
\end{gather}
Using the notations from Appendix~\ref{sec:equilibrium-propagation}, we further define
\begin{gather}
F(\beta,\theta,s) := E(\theta,s) + \beta C(\theta,s), \\
\label{eq:nudge-state}
s_\theta^\beta := \underset{s \in \mathcal{S}}{\arg\min} \; F(\beta,\theta,s), \\
G(\beta,\theta) := F \left( \beta,\theta,s_\theta^\beta \right), \\
\mathcal{L}_\beta \left( \theta \right) := \frac{G(\beta,\theta) - G(0,\theta)}{\beta}.
\end{gather}
We refer to $\mathcal{L}_\beta$ as the contrastive function.

\begin{thm}[Equilibrium propagation formulas]
\label{thm:ep-inequality}
The gradient of the contrastive function $\mathcal{L}_\beta$ is given by:
\begin{equation}
\nabla_\theta \mathcal{L}_\beta(\theta) = \frac{1}{\beta} \left( \frac{\partial F}{\partial \theta} \left( \beta,\theta,s_\theta^\beta \right) - \frac{\partial F}{\partial \theta} \left( 0,\theta,s^0 \right) \right).
\end{equation}
Furthermore, the contrastive function satisfies:
\begin{equation}
\forall \beta \in \mathbb{R}, \quad \mathcal{L}_\beta(\theta) = C \left( \theta,s(\theta) \right) + O(\beta), \qquad \text{and} \qquad \forall \beta>0, \quad \mathcal{L}_\beta(\theta) \leq C \left( \theta,s(\theta) \right) \leq \mathcal{L}_{-\beta}(\theta).
\end{equation}
\end{thm}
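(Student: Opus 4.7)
The plan is to carry out an envelope/sensitivity argument just as in the classical EP proof of \citet{scellier2023energy}, but with the stationarity condition $\partial_s E(s_\star)=0$ replaced by the KKT conditions for the inequality-constrained problem~\eqref{eq:free-state}--\eqref{eq:nudge-state}. Writing $G(\beta,\theta)=F(\beta,\theta,s_\theta^\beta)$ and differentiating through $s_\theta^\beta$ by the chain rule, the first step is to establish the envelope identities
\begin{equation}
\frac{\partial G}{\partial \theta}(\beta,\theta) = \frac{\partial F}{\partial \theta}\bigl(\beta,\theta,s_\theta^\beta\bigr), \qquad \frac{\partial G}{\partial \beta}(\beta,\theta) = \frac{\partial F}{\partial \beta}\bigl(\beta,\theta,s_\theta^\beta\bigr) = C\bigl(\theta,s_\theta^\beta\bigr).
\end{equation}
In each case the content is that the cross term $\frac{\partial F}{\partial s}(\beta,\theta,s_\theta^\beta)\cdot \frac{\partial s_\theta^\beta}{\partial(\cdot)}$ vanishes. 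The first formula of the theorem then follows by subtracting the $\beta=0$ and $\beta\neq 0$ versions of the first envelope identity and dividing by $\beta$.

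The main obstacle is precisely that the cross term is no longer zero for the naive reason (vanishing gradient), since $s_\theta^\beta$ may lie on the boundary of $\mathcal{S}$. To handle this, I would invoke the KKT conditions for problem~\eqref{eq:nudge-state}: there exist multipliers $\lambda_k(\beta,\theta)\geq 0$ with
\begin{equation}
\frac{\partial F}{\partial s}\bigl(\beta,\theta,s_\theta^\beta\bigr) + \sum_{k=1}^{K} \lambda_k(\beta,\theta)\,\frac{\partial f_k}{\partial s}\bigl(s_\theta^\beta\bigr) = 0, \qquad \lambda_k(\beta,\theta)\,f_k\bigl(s_\theta^\beta\bigr) = 0.
\end{equation}
Assuming standard regularity (LICQ and strict complementarity, so $s_\theta^\beta$ is locally smooth in $(\beta,\theta)$), I would split the index set into active ($f_k(s_\theta^\beta)=0$) and inactive ($f_k(s_\theta^\beta)<0$) constraints. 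Differentiating $f_k(s_\theta^\beta)\equiv 0$ along an active constraint with respect to $\theta$ (or $\beta$) gives $\frac{\partial f_k}{\partial s}(s_\theta^\beta)\cdot \frac{\partial s_\theta^\beta}{\partial \theta}=0$, while inactive constraints contribute nothing because $\lambda_k=0$ by complementary slackness. Substituting the KKT expression for $\partial_s F$ into the cross term and using this termwise annihilation yields the envelope identities, and the formula for $\nabla_\theta \mathcal{L}_\beta$ drops out immediately.

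For the second part, the expansion $\mathcal{L}_\beta(\theta)=C(\theta,s(\theta))+O(\beta)$ follows from a first-order Taylor expansion of $G(\beta,\theta)$ in $\beta$ about $\beta=0$, using $\partial_\beta G(0,\theta)=C(\theta,s(\theta))$ obtained just above. The bounds require one additional structural observation: for fixed $\theta$, the map $\beta\mapsto G(\beta,\theta) = \min_{s\in\mathcal{S}}\bigl[E(\theta,s)+\beta C(\theta,s)\bigr]$ is a pointwise infimum of affine functions of $\beta$, hence \emph{concave} in $\beta$ (no convexity of $E$ or of $\mathcal{S}$ is needed here). Concavity together with the tangent slope $\partial_\beta G(0,\theta)=C(\theta,s(\theta))$ gives $G(\beta,\theta)\leq G(0,\theta)+\beta\,C(\theta,s(\theta))$ for all $\beta\in\mathbb{R}$; dividing by $\beta$ with the appropriate sign yields $\mathcal{L}_\beta(\theta)\leq C(\theta,s(\theta))$ for $\beta>0$ and $\mathcal{L}_{-\beta}(\theta)\geq C(\theta,s(\theta))$ for $\beta>0$, which is the desired sandwich. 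The only real subtlety throughout is the non-smoothness of $s_\theta^\beta$ at parameter values where constraints change activity; this is handled by working on the open dense set where strict complementarity holds, exactly as in standard perturbation theory for constrained optima.
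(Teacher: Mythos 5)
Your proposal is correct and follows essentially the same route as the paper: KKT conditions in place of the unconstrained stationarity condition, an envelope argument with a case split driven by complementary slackness (you annihilate the term $\lambda_k\,\partial_s f_k\cdot\partial_\theta s_\theta^\beta$ where the paper symmetrically annihilates $f_k(s_\theta^\beta)\,\partial_\theta\lambda_k^\beta$ after differentiating the Lagrangian — the same envelope identity obtained two equivalent ways), a Taylor expansion in $\beta$ for the $O(\beta)$ statement, and the bounds from $G(\beta,\theta)$ being an infimum of affine functions of $\beta$ (the paper simply evaluates the infimum at $s(\theta)$, which is the same inequality your concavity-plus-tangent argument yields). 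Both arguments rest on the same implicit smoothness assumptions on $s_\theta^\beta$ and the multipliers, which you make slightly more explicit via LICQ and strict complementarity.
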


We prove Theorem~\ref{thm:ep-inequality} by extending the proofs from \citet{scellier2022agnostic,scellier2023energy}. The new addition is the introduction of the Lagrangian $L$ and the Lagrange multipliers associated with the constrained minimization problems of Eq.~\eqref{eq:free-state} and Eq.~\eqref{eq:nudge-state}, which brings us back to the setting where the stationary condition is satisfied.

\begin{proof}[Proof of Theorem~\ref{thm:ep-inequality}]
Since $\mathcal{S}$ is defined by inequality constraints, Eq.~\eqref{eq:nudge-state} represents constrained optimization problem. We introduce the corresponding Lagrangian function:
\begin{equation}
L \left( \beta,\theta,s,\lambda \right) := F(\beta,\theta,s) + \sum_k \lambda_k f_k(s),
\end{equation}
where the functions $f_k$ define the feasible set (Eq.~\eqref{eq:feasible-set-general}), and $\lambda = (\lambda_1, \cdots, \lambda_K) \in \mathbb{R}^K$ are the Lagrange multipliers. Since $s_\theta^\beta$ is a minimum of this constrained optimization problem, there exist Lagrange multipliers $\lambda_\theta^\beta = (\lambda_1^\beta, \cdots, \lambda_K^\beta)$ such that $(s_\theta^\beta,\lambda_\theta^\beta)$ satisfies the Karush-Kuhn-Tucker (KKT) conditions. The stationarity condition for the Lagrangian is
\begin{equation}
\label{eq:stationarity-general}
\frac{\partial L}{\partial s} \left( \beta,\theta,s_\theta^\beta,\lambda_\theta^\beta \right) = 0,
\end{equation}
while the primal feasibility condition, dual feasibility condition, and the complementary slackness for inequality constraints are:
\begin{equation}
\label{eq:KKT-conditions}
f_k(s_\theta^\beta) \leq 0, \qquad \lambda_k^\beta \geq 0, \qquad \lambda_k^\beta f_k(s_\theta^\beta) = 0, \qquad 1 \leq k \leq K.
\end{equation}
Next, we calculate the gradient of the contrastive function:
\begin{equation}
\nabla_\theta \mathcal{L}_\beta(\theta) = \frac{1}{\beta} \left( \frac{\partial G}{\partial \theta}(\beta,\theta) - \frac{\partial G}{\partial \theta}(0,\theta) \right).
\end{equation}
Using the definitions of $G$ and $L$, and the complementary slackness condition, we have
\begin{equation}
G(\beta,\theta) = L \left( \beta,\theta,s_\theta^\beta,\lambda_\theta^\beta \right).
\end{equation}
Therefore, applying the chain rule of differentiation, we get
\begin{align}
\frac{\partial G}{\partial \theta} \left( \beta,\theta \right) & = \frac{\partial L}{\partial \theta} \left( \beta,\theta,s_\theta^\beta,\lambda_\theta^\beta \right) + \underbrace{\frac{\partial L}{\partial s} \left( \beta,\theta,s_\theta^\beta,\lambda_\theta^\beta \right)}_{=0} \cdot \frac{\partial s_\theta^\beta}{\partial \theta} + \sum_k \frac{\partial L}{\partial \lambda_k} \left( \beta,\theta,s_\theta^\beta,\lambda_\theta^\beta \right) \cdot \frac{\partial \lambda_k^\beta}{\partial \theta} \\
& = \frac{\partial F}{\partial \theta} \left( \beta,\theta,s_\theta^\beta \right) + \sum_k f_k(s_\theta^\beta) \cdot \frac{\partial \lambda_k^\beta}{\partial \theta}
\end{align}
Here we have used the stationarity condition of the Lagrangian, Eq.~\eqref{eq:stationarity-general}. We claim that
\begin{equation}
\label{eq:claim}
f_k(s_\theta^\beta) \cdot \frac{\partial \lambda_k^\beta}{\partial \theta} = 0, \qquad 1 \leq k \leq K,
\end{equation}
which we prove next. Let $k$ be fixed. The primal feasibility condition of Eq.~\eqref{eq:KKT-conditions} imposes that $f_k(s_\theta^\beta) \leq 0$. If $f_k(s_\theta^\beta) = 0$, then Eq.~\eqref{eq:claim} holds. Otherwise, $f_k(s_\theta^\beta) < 0$. Using the smoothness of $f_k$ and the smoothness of $\theta^\prime \mapsto s_{\theta^\prime}^\beta$ (guaranteed by the implicit function theorem, assuming that $F$ is smooth too), there exists a neighborhood of $\theta$ such that $f_k(s_{\theta^\prime}^\beta) < 0$ for every $\theta^\prime$ in this neighborhood. By the complementary slackness of Eq.~\eqref{eq:KKT-conditions}, it follows that $\lambda_k^\beta = 0$ for every $\theta^\prime$ in this neighborhood, therefore $\frac{\partial \lambda_k^\beta}{\partial \theta} = 0$. Hence, Eq.~\eqref{eq:claim} holds again. As a consequence
\begin{equation}
\frac{\partial G}{\partial \theta} \left( \beta,\theta \right) = \frac{\partial F}{\partial \theta} \left( \beta,\theta,s_\theta^\beta \right).
\end{equation}
Hence,
\begin{equation}
\nabla_\theta \mathcal{L}_\beta(\theta) = \frac{1}{\beta} \left( \frac{\partial F}{\partial \theta} \left( \beta,\theta,s_\theta^\beta \right) - \frac{\partial F}{\partial \theta} \left( 0,\theta,s_\theta^0 \right) \right),
\end{equation}
which is the first half of Theorem~\ref{thm:ep-inequality}.

Next, we prove the second half: the properties of the contrastive function. Using again the chain rule of differentiation, we have
\begin{align}
\frac{\partial G}{\partial \beta} \left( \beta,\theta \right) & = \frac{\partial L}{\partial \beta} \left( \beta,\theta,s_\theta^\beta,\lambda_\theta^\beta \right) + \underbrace{\frac{\partial L}{\partial s} \left( \beta,\theta,s_\theta^\beta,\lambda_\theta^\beta \right)}_{=0} \cdot \frac{\partial s_\theta^\beta}{\partial \beta} + \sum_k \frac{\partial L}{\partial \lambda_k} \left( \beta,\theta,s_\theta^\beta,\lambda_\theta^\beta \right) \cdot \frac{\partial \lambda_k^\beta}{\partial \beta} \\
& = C \left( \theta,s_\theta^\beta \right) + \sum_k \underbrace{f_k(s_\theta^\beta) \cdot \frac{\partial \lambda_k^\beta}{\partial \beta}}_{=0} = C \left( \theta,s_\theta^\beta \right).
\end{align}
Here again, we have used the stationary condition of the Lagrangian, and the fact that
\begin{equation}
f_k(s_\theta^\beta) \frac{\partial \lambda_k^\beta}{\partial \beta} = 0, \qquad 1 \leq k \leq K,
\end{equation}
which we prove similarly as before. Evaluating the above expression at $\beta=0$, we get
\begin{equation}
\frac{\partial G}{\partial \beta} \left( 0,\theta \right) = C \left( \theta,s(\theta) \right),
\end{equation}
since $s_\theta^0 = s(\theta)$ by definition. Using a Taylor expansion of $G(\beta,\theta)$ around $\beta=0$, we get
\begin{equation}
G(\beta,\theta) = G(0,\theta) + \beta C \left( \theta,s(\theta) \right) + O(\beta^2).
\end{equation}
Subtracting $G(\theta,0)$ on both sides and dividing by $\beta$, we get
\begin{equation}
\mathcal{L}_\beta(\theta) = \frac{G(\beta,\theta) - G(0,\theta)}{\beta} = C \left( \theta,s(\theta) \right) + O(\beta).
\end{equation}
As for the upper bound and lower bound properties, we write for any $\beta \in \mathbb{R}$,
\begin{equation}
G(\beta,\theta) = \inf_{s \in \mathcal{S}} \left\{E(\theta,s)+ \beta C(\theta,s)\right\} \leq E \left( \theta,s(\theta) \right) + \beta C \left( \theta,s(\theta) \right) = G(0,\theta) + \beta C \left( \theta,s(\theta) \right).
\end{equation}
Subtracting $G(0,\theta)$ on both sides we get
\begin{equation}
G(\beta,\theta) - G(0,\theta) \leq \beta C(\theta,s(\theta)).
\end{equation}
Next we divide by $\beta$. For $\beta > 0$, we get
\begin{equation}
\mathcal{L}_\beta(\theta) = \frac{G(\beta,\theta) - G(0,\theta)}{\beta} \leq C(\theta,s(\theta)), \qquad \beta > 0,
\end{equation}
and for $\beta < 0$ we get
\begin{equation}
\mathcal{L}_\beta(\theta) = \frac{G(\beta,\theta) - G(0,\theta)}{\beta} \geq C(\theta,s(\theta)), \qquad \beta < 0.
\end{equation}
Hence the result.
\end{proof}

\end{document}